\let\oldReturn\Return
\renewcommand{\Return}{\State\oldReturn}
\newtheorem{theorem}{Theorem}[section]
\newtheorem{lemma}[theorem]{Lemma}
\newtheorem{proposition}[theorem]{Proposition}
\newtheorem{corollary}[theorem]{Corollary}
\newtheorem{observation}[theorem]{Observation}
\newtheorem{remark}[theorem]{Remark}
\newtheorem{example}[theorem]{Example}
\newcommand{\bx}{\mathbf{x}} 
\newcommand{\by}{\mathbf{y}}
\newcommand{\bp}{\mathbf{p}}
\title{A Polynomial-Time Algorithm for Fair and Efficient Allocation with a Fixed Number of Agents}
\author{Ryoga Mahara\thanks{Department of Mathematical Informatics, University of Tokyo.
E-mail: mahara@mist.i.u-tokyo.ac.jp}
}
\date{}
\begin{document}

\maketitle
\begin{abstract}
We study the problem of fairly and efficiently allocating indivisible goods among agents with additive valuation functions. Envy-freeness up to one good (EF1) is a well-studied fairness notion for indivisible goods, while Pareto optimality (PO) and its stronger variant, fractional Pareto optimality (fPO), are widely recognized efficiency criteria. 
Although each property is straightforward to achieve individually, simultaneously ensuring both fairness and efficiency is challenging. Caragiannis et al.~\cite{caragiannis2019unreasonable} established the surprising result that maximizing Nash social welfare yields an allocation that is both EF1 and PO; however, since maximizing Nash social welfare is NP-hard, this approach does not provide an efficient algorithm. To overcome this barrier, Barman, Krishnamurthy, and Vaish~\cite{barman2018finding} designed a pseudo-polynomial time algorithm to compute an EF1 and PO allocation, and showed the existence of EF1 and fPO allocations.
Nevertheless, the latter existence proof relies on a non-constructive convergence argument and does not directly yield an efficient algorithm for finding EF1 and fPO allocations. 
Whether a polynomial-time algorithm exists for finding an EF1 and PO (or fPO) allocation remains an important open problem.

In this paper, we propose a polynomial-time algorithm to compute an allocation that achieves both EF1 and fPO under additive valuation functions when the number of agents is fixed. Our primary idea is to avoid processing the entire instance at once; instead, we sequentially add agents to the instance and construct an allocation that satisfies EF1 and fPO at each step.
\end{abstract}

\newpage

\section{Introduction}\label{sec:int}
The fair division problem has been a central research topic across various fields, including mathematics, economics and computer science, since it was formally introduced by Steinhaus~\cite{steinhaus}. This problem aims to allocate resources among agents in a {\it fair} and {\it efficient} manner.
It has various real-world applications, including rent division~\cite{edward1999rental}, course allocation~\cite{budish2017course, othman2010finding}, pilot-to-plane assignment for airlines, allocation of tasks to workers, articles to reviewers, and fair recommender systems.
Early studies primarily focused on {\it divisible} goods, resulting in extensive literature on fair division in mathematics and economics~\cite{brams1996fair, robertson1998fair, moulin2004fair, brandt2016handbook}.
A widely accepted standard of fairness is {\it envy-freeness} (EF)~\cite{foley1966resource}, which requires that each agent prefers their own bundle to that of any other agent. 
On the other hand, Pareto optimality (PO) is a fundamental efficiency criterion: an allocation is {\it Pareto optimal} if no allocation exists that makes an agent better off without making any other agent worse off.
Pareto optimality is independent of fairness, and is widely used to evaluate whether allocations are efficient and free of waste.
Varian~\cite{varian1974equity} notably showed that, for divisible goods, there always exists an envy-free and Pareto optimal allocation.
Furthermore, it can be computed in polynomial time under additive valuation functions~\cite{eisenberg1959consensus, devanur2008market, orlin2010improved, vegh2012strongly}.

In contrast, for {\it indivisible} goods, where each good must be allocated to a single agent, envy-free allocations are not guaranteed to exist. For instance, even in the simple case of distributing a single good between two agents, no envy-free allocation is possible. 
This limitation implies that classical fairness concepts and algorithms are often inapplicable to indivisible goods, motivating research into new fairness concepts and algorithms specifically suited to discrete fair division problems. For an overview of recent advances in this area, we refer the reader to recent surveys~\cite{amanatidis2023fair, walsh2021fair}.

To address the fair division problem with indivisible goods, several relaxed fairness notions have been proposed. One well-studied fairness notion is {\it Envy-Freeness up to one good} (EF1), introduced by Budish~\cite{budish2011combinatorial}.
EF1 requires that each agent prefers their own bundle to that of any other agent after removing at most one good from the latter's bundle. Under monotone valuation functions, it has been shown that an EF1 allocation always exists and can be computed in polynomial time~\cite{lipton2004approximately}.

Achieving EF1 or PO individually is straightforward; however, whether EF1 and PO can be attained simultaneously is a key question. Caragiannis et al.~\cite{caragiannis2019unreasonable} established the surprising result that, under {\it additive} \footnote{{\it Additivity} implies that each agent’s valuation for a set of goods equals the sum of their valuations for each individual good within that set.} valuations, maximizing Nash social welfare~\cite{Nash50, kaneko1979nash}, defined as the geometric mean of the agents’ valuations, results in an allocation that is both EF1 and PO. However, since maximizing Nash social welfare is NP-hard~\cite{nguyen2014computational} and even APX-hard~\cite{lee2017apx}, this approach does not directly yield an efficient algorithm.

To overcome this barrier, Barman, Krishnamurthy, and Vaish~\cite{barman2018finding} proposed a pseudo-polynomial time algorithm that computes an EF1 and PO allocation and proved that an EF1 and {\it fractionally Pareto optimal} (fPO) allocation always exists.
An allocation is {\it fractionally Pareto optimal} (fPO) if no fractional allocation exists that makes an agent better off without making any other agent worse off.
Nevertheless, their existence proof relies on a non-constructive convergence argument and does not directly yield an algorithm for computing such an allocation.

Clearly, fPO is a stronger efficiency criterion than PO.
In addtion, fPO allocations offer another advantage over PO allocations: Given an allocation, fPO can be verified efficiently~\cite{Sandomirskiy22}, whereas verifying whether an allocation is PO is coNP-complete~\cite{de2009complexity}. 
This efficient verification is particularly advantageous when a centralized authority conducts the allocation since all participants can verify that the allocation is fPO (and therefore also PO). However, this efficient verification is not feasible for PO allocations.
Whether a polynomial-time algorithm exists for finding an EF1 and PO (or fPO) allocation remains an important open problem.

\subsection{Related Work}

\paragraph{Fair and efficient allocation for indivisible goods}
Caragiannis et al.~\cite{caragiannis2019unreasonable} established that, under additive valuations, maximizing Nash social welfare~\cite{Nash50, kaneko1979nash} yields an allocation that is both EF1 and PO. 

An approach commonly used to find allocations that satisfy both fairness and PO uses the connection between fair division and market equilibrium in Fisher markets~\cite{budish2011combinatorial}.

Barman, Krishnamurthy, and Vaish~\cite{barman2018finding} developed a pseudo-polynomial time algorithm to compute an allocation that is both EF1 and PO. They further showed the existence of an EF1 and fPO allocation.

Barman and Krishnamurthy~\cite{barman2019proximity} developed a strongly polynomial-time algorithm that computes a PROP1 and fPO allocation, where PROP1 (Proportionality up to one good) is a fairness notion that is weaker than EF1 under additive valuation functions.

Garg and Murhekar~\cite{garg2024computing} proposed a pseudo-polynomial time algorithm for computing EF1 and fPO allocations. However, due to significant issues in their proof, we were unable to verify its correctness (see Appendix~\ref{ap:2} for details). They also reported that when the number of agents is fixed, there exists a polynomial-time algorithm to compute an EF1 and PO allocation.

Garg and Murhekar~\cite{garg2023computing} showed that under additive, bi-valued valuation functions, an EFX (envy-free up to any good) and fPO allocation  exists and can be computed in polynomial time.
They also established that EFX and PO are incompatible in instances with three distinct values. Here, EFX, which is a stronger fairness notion than EF1, requires that each agent prefers their own bundle to that of any other agent after removing any single good from the latter’s bundle.

Freeman et al.~\cite{freeman2019} showed that if all values are positive, an EQX+PO allocation always exists; however, they also showed that when values may be zero, an EQ1+PO allocation does not exist.
Here, EQ1 (equitability  up to one good) and EQX (equitability up to any good) refer to equitability criteria.

\paragraph{Fair and efficient allocation for indivisible chores}

The fair division problem for chores (items with negative value) is also an important research topic. In this context, concepts such as EF1 and PO can be defined analogously to their counterparts in the case of goods. However, unlike for indivisible goods, the existence of an EF1 and PO allocation for chores under additive valuations remains an open problem.

To address this issue, several studies have focused on restricted instances. The existence and polynomial-time computability of EF1 and fPO allocations for chores are known in the following cases: bi-valued instances~\cite{ebadian2022fairly, garg2022fair}; instances with two types of chores~\cite{Aziz2023}; instances with three agents~\cite{garg2023new}; and instances with two types of agents~\cite{garg2023new}.

\paragraph{Approximating Nash social welfare}

The development of approximation algorithms for maximizing Nash social welfare has been an active research topic in theoretical computer science in recent years, as maximum Nash social welfare (MNW) allocations satisfy both EF1 and PO~\cite{caragiannis2019unreasonable}.

For additive valuation functions, the first constant-factor approximation algorithm, achieving a ratio of $2\cdot \mathrm{e}^{1/\mathrm{e}}\approx 2.88$, was proposed by Cole and Gkatzelis ~\cite{cole2018approximating}. This factor was subsequently improved to $\mathrm{e}$~\cite{anari2017nash}, further refined to 2~\cite{cole2017convex}, and currently stands at the best-known approximation ratio of $\mathrm{e}^{1/\mathrm{e}} + \epsilon \approx 1.45$ ~\cite{barman2018finding}. Although approximating MNW is itself an interesting research topic, it is worth noting that an approximate MNW allocation does not necessarily satisfy EF1 or PO. In contrast, the 1.45-approximation algorithm by~\cite{barman2018finding} notably provides guarantees of approximate EF1 and PO, a significant achievement in this context.
In addition, similar approximation guarantees have been established for more general market models, such as piecewise-linear concave valuations~\cite{anari2018nash}, budget-additive valuations~\cite{garg2018approximating}, submodular valuations~\cite{garg2023approximating}, and multi-unit markets~\cite{bei2017earning}.

\subsection{Our Contributions}

The main contribution of this paper lies in proposing a polynomial-time algorithm to compute an allocation that achieves both EF1 and fPO under additive valuation functions when the number of agents is fixed.

\begin{theorem}\label{thm:main}
    When each agent has an additive valuation function and the number of agents is fixed, an EF1 and fPO allocation can be computed in polynomial time.
\end{theorem}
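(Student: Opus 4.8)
The plan is to work entirely inside the Fisher-market / maximum-bang-per-buck (MBB) framework of Barman, Krishnamurthy, and Vaish, and to graft an agent-by-agent construction onto it. I will use two standard facts. First, an allocation $\bx$ is fPO if and only if there is a positive price vector $\bp$ such that every good is assigned to an agent for whom it is MBB, i.e.\ $v_{ij}/p_j=\max_\ell v_{i\ell}/p_\ell$ whenever $j\in x_i$ (equivalently, $\bx$ maximizes $\sum_i w_i v_i(x_i)$ over integral allocations for some weights $w_i>0$). Second, if $(\bx,\bp)$ is moreover \emph{price-envy-free up to one good} (pEF1) --- $p(x_i)\ge p(x_k)-\max_{j\in x_k}p_j$ for all $i,k$ --- then $\bx$ is EF1: writing $\alpha_i=\max_\ell v_{i\ell}/p_\ell$ we have $v_i(x_i)=\alpha_i\,p(x_i)\ge \alpha_i\bigl(p(x_k)-p_{g^\star}\bigr)\ge v_i(x_k\setminus\{g^\star\})$, where $g^\star$ is the priciest good in $x_k$, since $v_{i\ell}\le\alpha_i p_\ell$ for every good. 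Hence it suffices to produce, in time polynomial in the input for each fixed number $n$ of agents, a pair $(\bx,\bp)$ that is simultaneously MBB and pEF1.

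The construction adds agents one at a time. Order the agents $1,\dots,n$ and maintain the invariant that after stage $t$ we hold an allocation $\bx^{(t)}$ of \emph{all} the goods together with a positive price vector $\bp^{(t)}$ that is MBB and pEF1 when restricted to the agents $\{1,\dots,t\}$. Stage $1$ is trivial: give every good to agent $1$ and set $p_j=v_{1j}$. For the step $t\to t+1$, the newcomer enters; treating her weight $w_{t+1}$ (equivalently her spending budget) as a tuning knob --- small $w_{t+1}$ leaves her with essentially nothing and the previous configuration intact, large $w_{t+1}$ lets her outbid the incumbents --- we run a Barman--Krishnamurthy--Vaish-style local search: repeatedly pick a pEF1-violated least-spender, raise the prices of the goods reachable from her along MBB-tight alternating paths until a new MBB edge appears, transfer a good along such a path, and re-scale (adjusting $w_{t+1}$) as needed; stop as soon as pEF1 holds among all $t+1$ agents. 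By construction the MBB property is preserved throughout.

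Two claims are needed to make the step run in polynomial time. (i) \emph{Termination/correctness:} the lexicographically sorted vector of the agents' spendings strictly improves at each transfer (as in the original pseudo-polynomial algorithm), so the search is monotone and, when it halts, yields an MBB and pEF1 configuration over $\{1,\dots,t+1\}$, re-establishing the invariant. (ii) \emph{Polynomial iteration bound:} this is where fixing $n$ is essential. The combinatorial state of a configuration is captured by the family $(S_j)_{j}$, where $S_j\subseteq\{1,\dots,n\}$ is the set of agents for whom good $j$ is MBB; the realizable such families correspond to faces of the arrangement of the $O(n^2 m)$ hyperplanes $\{\,w_i v_{ij}=w_k v_{kj}\,\}$ in the $(n-1)$-dimensional space of weight vectors, of which there are only $m^{O(n^2)}$ --- polynomial for fixed $n$. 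Combined with the monotone potential to forbid revisiting a state, and a bound on the number of transfers that can occur while the state is frozen, the local search performs $m^{O(n^2)}$ iterations of polynomial cost each; iterating over the $n$ stages gives an overall polynomial running time, and the MBB+pEF1 certificate at stage $n$ yields the desired EF1 and fPO allocation.

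The main obstacle I expect is precisely (ii): converting ``$n$ fixed $\Rightarrow$ polynomially many combinatorial states'' into an honest polynomial bound on the number of price-rise and transfer steps. One cannot simply count distinct spending vectors, since even for fixed $n$ these can take pseudo-polynomially many values; the argument must track the sequence of \emph{combinatorial} states, show it does not repeat, and show that within each frozen state only polynomially many transfers can happen before the state changes or pEF1 is reached --- plausibly by charging transfers to a quantity that becomes monotone once the MBB pattern is fixed. Equally delicate is ensuring that re-scaling the newcomer's weight while processing stage $t+1$ does not cause the pEF1 guarantees of agents $1,\dots,t$ to oscillate back and forth; controlling this interaction, so that each stage genuinely makes progress, is the technical heart of the proof.
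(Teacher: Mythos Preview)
Your high-level architecture --- add agents one at a time and, after each addition, restore an MBB allocation that is pEF1 via price-rises and transfers along alternating paths --- is exactly the paper's approach. But the part you yourself flag as ``the technical heart of the proof'' is genuinely missing, and the sketch you give for it does not close the gap.

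Concretely, your claim (ii) rests on two unproved assertions: that the sequence of MBB patterns $(S_j)_j$ never repeats, and that within a fixed pattern only polynomially many transfers occur. Neither follows from what you wrote. The lexicographic-spending potential you invoke in (i) is not a function of the MBB pattern alone (many allocations share the same pattern), so its monotonicity does not prevent the pattern from recurring; and ``plausibly by charging transfers to a quantity that becomes monotone once the MBB pattern is fixed'' is a hope, not an argument. Moreover, your single-good transfers \`a la BKV do not obviously preserve pEF1 among the incumbents $1,\dots,t$ while you are fixing up agent $t+1$: in BKV the least-spender can change, so the incumbents' pEF1 can be temporarily broken and then re-established, but then your per-stage potential need not be monotone and the interaction you worry about at the end is real.

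The paper resolves both issues with ideas you do not have. First, it arranges the step so that the unique minimum spender is \emph{always} the newly added agent $k$; this is achieved by initializing $k$'s bundle and prices carefully (only goods $k$ values positively and that no earlier agent values, at sufficiently low prices) and by using a multi-good transfer along the shortest $k$-to-violator path that is engineered, via two carefully chosen cut indices, to keep every incumbent above $\max(\bx,\hat{\bp})$ throughout. Second, the polynomial bound comes not from counting MBB patterns but from a potential tied to this fixed source $k$: define $\mathrm{level}(i)$ as the distance from $k$ to $i$ in the augmented MBB graph and let $\Phi=(|\mathcal{M}_0|,\dots,|\mathcal{M}_k|,|K|)$ where $\mathcal{M}_r$ is the set of goods held at level $r$ and $K$ is the maximum-violator set. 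Every price-rise or transfer strictly increases $\Phi$ lexicographically, and $\Phi$ takes at most $(k-1)\binom{m+k}{k}$ values --- polynomial for fixed $k$. This level-based potential is the missing idea; your hyperplane-arrangement count is not a substitute for it.
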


Moreover, our approach directly contributes to the Nash social welfare maximization problem. As mentioned above, Barman, Krishnamurthy, and Vaish~\cite{barman2018finding} developed a 1.45-approximation algorithm for maximizing Nash social welfare that achieves approximate EF1 and PO. We show that similar results hold: the EF1 and fPO allocation produced by our algorithm also serves as an $\mathrm{e}^{1/e} \approx 1.444$-approximation algorithm for the Nash social welfare maximization problem. Note that the Nash social welfare maximization problem is NP-hard even when there are only two agents. This result is significant as it provides theoretical guarantees for both fairness and efficiency while also approximating Nash social welfare.

\begin{theorem}\label{thm:nash}
    When each agent has an additive valuation function and the number of agents is fixed, there exists a polynomial-time $\mathrm{e}^{1/\mathrm{e}}$-approximation algorithm for the Nash social welfare maximization problem.
    Furthermore, the resulting allocation satisfies both EF1 and fPO.
\end{theorem}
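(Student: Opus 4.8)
The plan is to deduce Theorem~\ref{thm:nash} from Theorem~\ref{thm:main} together with the quantitative Nash-social-welfare analysis of Barman, Krishnamurthy, and Vaish~\cite{barman2018finding}. By Theorem~\ref{thm:main}, when the number of agents is fixed we can compute an EF1 and fPO allocation in polynomial time, so the EF1 and fPO part of the statement is already in hand; what remains is to show that (a suitable) such allocation approximates the optimal Nash social welfare within a factor $\mathrm{e}^{1/\mathrm{e}}$. The key observation is that the allocation $x$ returned does not merely satisfy EF1 and fPO as abstract properties: it is produced together with a price vector $p$ with respect to which every agent receives a maximum bang-per-buck (MBB) bundle, so that $(x,p)$, with agent $i$'s budget set to her spending $p(x_i)$, is a Fisher-market equilibrium, and moreover $x$ is \emph{price-envy-free up to one good} (pEF1): for every ordered pair $i,j$ there is a good $g\in x_j$ with $p(x_i)\ge p(x_j\setminus\{g\})$. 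This is precisely the invariant maintained by the market-based, agent-by-agent construction underlying Theorem~\ref{thm:main}; recall that pEF1 at an equilibrium already implies EF1, since the MBB condition gives $v_i(x_j\setminus\{g\})\le \alpha_i\, p(x_j\setminus\{g\})\le \alpha_i\, p(x_i)=v_i(x_i)$, where $\alpha_i$ denotes $i$'s bang-per-buck ratio.

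Granting this, the approximation bound is obtained exactly as in~\cite{barman2018finding}. Since $x$ is an equilibrium allocation for the budget vector $(p(x_i))_i$, the Eisenberg--Gale characterization shows that $x$ maximizes $\sum_i p(x_i)\log v_i(\cdot)$ over all \emph{fractional} allocations. The pEF1 condition (together with the fact that the most expensive good in a bundle carries at least a $1/k$ fraction of that bundle's price, where $k$ is the number of goods in it) controls the disparity among the spendings $p(x_i)$ sufficiently that a weighted AM--GM estimate converts this budget-weighted optimality into an unweighted comparison of Nash social welfares, at the cost of a factor at most $\max_k k^{1/k}=\mathrm{e}^{1/\mathrm{e}}$. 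In particular $\mathrm{NSW}(x)\ge \mathrm{e}^{-1/\mathrm{e}}\cdot\mathrm{NSW}(y)$ for every fractional allocation $y$, and a fortiori for the integral Nash-optimal allocation. Combined with the EF1, fPO, and polynomial-time guarantees of Theorem~\ref{thm:main}, this establishes the theorem. Degenerate cases---goods valued at $0$ by every agent, or an agent whose total value is $0$---are handled by the standard preprocessing, as they affect neither the equilibrium structure nor the Nash social welfare.

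The step I expect to require the most care is checking that the output of the algorithm of Theorem~\ref{thm:main} is genuinely pEF1 with respect to its supporting equilibrium, and not merely EF1: the Nash-approximation argument really does need the \emph{price}-based envy bound, since EF1 alone does not limit the spread of the equilibrium spendings, so one must verify that the pEF1 invariant survives each phase in which a new agent is introduced and the current allocation is re-adjusted. A secondary technical point is to confirm that carrying along the prices and the pEF1 witnesses does not spoil the running-time bound, which remains polynomial in the input size for every fixed number of agents. Finally, I would emphasize that the fixed-agents hypothesis enters only through the polynomial running time inherited from Theorem~\ref{thm:main}; the ratio $\mathrm{e}^{1/\mathrm{e}}$ is inherited unchanged from the analysis of~\cite{barman2018finding}, and the result is already nontrivial because maximizing Nash social welfare is NP-hard even for two agents~\cite{nguyen2014computational}.
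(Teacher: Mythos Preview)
Your high-level plan is right and matches the paper: the point is that the algorithm of Theorem~\ref{thm:main} outputs not just an EF1+fPO allocation but a pair $(\bx,\bp)$ with $\bx_i\subseteq\mathrm{MBB}_i$ and $(\bx,\bp)$ pEF1, and this extra structure is exactly what the approximation analysis of~\cite{barman2018finding} needs. The paper also explicitly relies on the pEF1 invariant being maintained throughout, just as you anticipate.

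However, the specific mechanism you sketch for extracting the $\mathrm{e}^{1/\mathrm{e}}$ factor is not the paper's argument and has a gap. The Eisenberg--Gale step is correct: $\bx$ does maximize $\sum_i \bp(\bx_i)\log v_i(\cdot)$ over fractional allocations. But ``a weighted AM--GM estimate converts this budget-weighted optimality into an unweighted comparison'' is not substantiated. The pEF1 condition gives only an \emph{additive} bound $\bp(\bx_i)\ge \bp(\bx_j)-\max_{g\in\bx_j}p_g$, and the observation that the most expensive good carries at least a $1/k$ fraction of a $k$-good bundle is a \emph{lower} bound on that subtracted term, not an upper bound; a single high-priced good can make $\min_i \bp(\bx_i)/\max_j \bp(\bx_j)$ arbitrarily small, so there is no clean multiplicative control on the weights and hence no direct weighted-to-unweighted conversion yielding $\mathrm{e}^{1/\mathrm{e}}$.

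The paper (following Theorem~3.3 of~\cite{barman2018finding}) bypasses Eisenberg--Gale entirely. It rescales each agent's valuation by $1/\alpha_i$ so that $v'_i(\bx_i)=\bp(\bx_i)$ and $v'_i(\bx^*_i)\le \bp(\bx^*_i)$, then passes to the \emph{identical}-valuation instance $v^{\mathrm{id}}_{ig}=p_g$. In that instance pEF1 is literally EF1, and Lemma~3.4 of~\cite{barman2018finding}---any EF1 allocation under identical additive valuations is an $\mathrm{e}^{1/\mathrm{e}}$-approximation to Nash welfare---is invoked as a black box to get $\prod_i\bp(\bx_i)\ge \mathrm{e}^{-n/\mathrm{e}}\max_\by\prod_i\bp(\by_i)\ge \mathrm{e}^{-n/\mathrm{e}}\prod_i\bp(\bx^*_i)$. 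Chaining with the two scaling inequalities finishes. The expression $\max_k k^{1/k}$ lives inside the proof of that lemma for identical valuations, not in a weighted-versus-unweighted comparison of log sums.
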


\subsection{Our Techniques}

Our approach builds on the techniques introduced by Barman, Krishnamurthy, and Vaish~\cite{barman2018finding}. We begin by briefly outlining their method. They developed a pseudo-polynomial time algorithm to compute an allocation that is both EF1 and PO. Their algorithm first perturbs valuations to a desirable form, then computes an EF1 and fPO allocation for this perturbed instance. This resulting allocation is approximately EF1 and PO with respect to the original instance and becomes EF1 and PO if the perturbation is sufficiently small. 

Specifically, their algorithm maintains an integral allocation and prices for goods at each step, ensuring they correspond to an equilibrium outcome in a Fisher market. This equilibrium guarantees fPO by the first welfare theorem. The algorithm adjusts the allocation and prices iteratively by reallocating goods and increasing prices to approach a fairer allocation. The algorithm stops once the current allocation and prices achieve approximate {\it price envy-freeness up to one good } (pEF1).
Here, pEF1 requires that the spending of each agent is at least as high as that of any other agent after removing the most expensive good in the latter's bundle. Requiring the spending to be balanced in this manner yields EF1 for the corresponding fair division instance.

Our primary idea is to avoid processing the entire instance at once; instead, we sequentially add agents to the instance and construct an allocation that satisfies EF1 and fPO at each step. In the $k$-th iteration of the algorithm, we start with an EF1 and fPO allocation for an instance of $k-1$ agents, add the $k$-th agent to the instance, and find a new allocation that satisfies EF1 and fPO for $k$ agents. To do this, the algorithm maintains an allocation and prices for goods that correspond to an equilibrium outcome in a Fisher market.

In the $k$-th iteration, the algorithm achieves a PEF1 allocation by reallocating goods and adjusting prices. During this process, the allocation remains EF1 and fPO for the existing $k-1$ agents, while the reallocating and price increase are designed to eliminate the dissatisfaction of the newly added $k$-th agent.

Our approach differs from previous methods in several notable respects. First, we do not perturb the instance; rather, we compute an allocation that directly satisfies EF1 and fPO for the given instance. Second, in each iteration of our algorithm, we ensure that the {\it minimum spender}, who is the agent with the lowest spending, is always the newly added agent $k$. In previous algorithms, the minimum spender may change during reallocation, but our algorithm consistently operates to eliminate the dissatisfaction of $k$-th agent. This introduces a ``direction" for achieving fairness and enables new techniques to bound the number of iterations. Third, in the $k$-th iteration, since we need to maintain EF1 and fPO for the existing $k-1$ agents, we allow for simultaneous exchanges of multiple goods.
As far as we know, this approach is novel in the context of constructing fair and efficient allocations.

\subsection{Organization}
In Section~\ref{sec:pre}, we introduce the fair division model and the relevant notions of fairness and efficiency.
We also present the Fisher market framework and define key notions such as price envy-freeness, minimum spenders, maximum violators, and the MBB graph.
Section~\ref{sec:alg} gives a detailed description of the algorithms we propose.
In Section~\ref{sec:ana}, we analyze our algorithms and prove Theorem~\ref{thm:main}.
Finally, in Section~\ref{sec:con}, we summarize our results and suggest directions for future research.

Additionally, in Appendix~\ref{ap:1}, we provide the proof of Theorem~\ref{thm:nash}, and in Appendix~\ref{ap:2}, we discuss an error of the proof in~\cite{garg2024computing}.

\section{Preliminaries}\label{sec:pre}
For positive integer $q$, let $[q]$ denote $\{1,\ldots, q\}$.
\paragraph{The fair division model}
A {\it fair division instance} is represented by a tuple $(N,M,V)$, where $N=[n]$ denotes a set of $n$ agents, $M=[m]$ denotes a set of $m$ indivisible goods, and $V=\{v_i\}_{i\in N}$ represents a set of valuation functions of each agent $i\in N$.
In this paper, we assume that each valuation function $v_i: 2^M \rightarrow \mathbb{R}_{\ge 0}$ is {\it additive}, i.e., $v_i(S)=\sum_{g\in S} v_i(\{g\})$ for each $i\in N$ and any $S\subseteq M$.\footnote{We assume that $v_i(\emptyset)=0$ for all $i\in N$.}
To simplify notation, we will write $v_{ig}$ instead of $v_i(\{g\})$ for a singleton good $g\in M$.
We assume that for each good $g \in M$, there exists an agent $i \in N$ such that $v_{ig}>0$.
Otherwise, we eliminate the good from the instance.
Similarly, we assume that for each agent $i \in N$, there exists a good $g \in M$ such that $v_{ig}>0$.
Otherwise, we eliminate the agent from the instance.

An {\it allocation} $\bx=(\bx_i)_{i \in N}$ is an $n$-partition of $M$, where $\bx_i\subseteq M$ is the {\it bundle} allocated to agent $i$.
Given an allocation $\bx$, the valuation of agent $i\in N$ for the bundle $\bx_i$ is $v_i(\bx_i)=\sum_{g\in \bx_i} v_{ig}$.
A {\it fractional allocation} $\bx = (\bx_i)_{i \in N}$ represents a fractional assignment of the goods to the agents, where each $\bx_i$ is a vector $(x_{ig})_{g \in M}$, and $x_{ig} \in [0,1]$ denotes the fraction of good $g$ allocated to agent $i$. Additionally, this allocation satisfies the condition that $\sum_{i \in N} x_{ig} \le 1$ for each good $g \in M$.
Given a fractional allocation $\bx$, the valuation of agent $i\in N$ for $\bx_i$ is $v_i(\bx_i)=\sum_{g\in M} x_{ig}\cdot v_{ig}$.
We will use ``allocation'' to mean an integral allocation and specify ``fractional allocation'' otherwise.

\paragraph{Fairness notions}
Given a fair division instance $(N,M,V)$ and an allocation $\bx=(\bx_i)_{i \in N}$, 
we say that an agent $i\in N$ {\it envies} another agent $j\in N$ if $v_i(\bx_i)<v_i(\bx_j)$.
An allocation $\bx$ is said to be {\it envy-free} (EF) if no agent envies any other agent.
An allocation $\bx$ is said to be {\it envy-free up to one good} (EF1) if for any pair of agents $i,j\in N$ where $i$ envies $j$\footnote{The phrase ``where $i$ envies $j$'' is necessary; otherwise, the condition is not satisfied when $\bx_j$ is an empty set.}, there exists a good $g\in \bx_j$ such that $v_i(\bx_i)\ge v_i(\bx_j\setminus \{g\})$.

\paragraph{Efficiency notions}
Given a fair division instance $(N,M,V)$ and an allocation $\bx=(\bx_i)_{i \in N}$, 
we say that $\bx$ is {\it Pareto dominated} by another allocation $\by$ if $v_i(\by_i)\ge v_i(\bx_i)$ for every agent $i\in N$, and $v_j(\by_j)> v_j(\bx_j)$ for some agent $j\in N$.
An allocation $\bx$ is said to be {\it Pareto optimal} (PO) if $\bx$ is not Pareto dominated by any other allocation.
Similarly, an allocation $\bx$ is said to be {\it fractionally Pareto optimal} (fPO) if $\bx$ is not Pareto dominated by any fractional allocation.
Note that a fractionally Pareto optimal allocation is also Pareto optimal, but not vice versa (see the full version of \cite{barman2018finding} for an example).

\paragraph{Nash social welfare}
Given an allocation $\bx$, the {\it Nash social welfare} of $\bx$ is defined as $(\Pi_{i\in N} v_i(\bx_i))^{\frac{1}{n}}$.

\paragraph{Fisher market}
The Fisher market is a fundamental model of an economic market introduced by Fisher in 1891~\cite{brainard2005compute}.
In the {\it linear Fisher market model}, 
an instance is represented by a tuple $(N,M,V,B)$, where $N=[n]$ denotes a set of $n$ {\it buyers}, and $M=[m]$ denotes a set of $m$ {\it divisible goods}, each of which is assumed to be a unit without loss of generality. 
$V=\{v_i\}_{i\in N}$ represents a set of {\it additive} valuation functions $v_i: 2^M \rightarrow \mathbb{R}_{\ge 0}$ of each buyer $i\in N$, and $B=\{b_i\}_{i\in N}$ represents a set of {\it budgets} of each buyer $i\in N$, where each $b_i$ is a non-negative real number.
We refer to the tuple $(N,M,V,B)$ as a {\it Fisher market instance}.

A {\it market outcome} is represented by the pair $(\bx, \bp)$, where $\bx=(\bx_i)_{i \in N}$ is a fractional allocation of $m$ goods, and $\bp=(p_g)_{g\in M}$ is a {\it price vector} representing the {\it prices} for each good $g\in M$.
Here, each price $p_g$ is a non-negative real number.
The {\it spending} of buyer $i$ under the market outcome $(\bx, \bp)$ is defined as $\bp(\bx_i)=\sum_{g\in M}x_{ig}\cdot p_{g}$.
The {\it valuation} of buyer $i$ under the market outcome $(\bx, \bp)$ is given by $v_i(\bx_i)=\sum_{g\in M} x_{ig}\cdot v_{ig}$.

Given a price vector $\bp=(p_g)_{g\in M}$, the {\it bang-per-buck ratio} for buyer $i$ with respect to good $g$ is defined as $v_{ig}/p_g$\footnote{If $v_{ig}=p_g=0$, then we define $v_{ig}/p_g=0$.}.
The {\it maximum bang-per-buck ratio} (MBB ratio) for buyer $i$ is given by $\alpha_i:=\max_{g\in M}v_{ig}/p_g$.
Let ${\rm MBB}_i:=\{g\in M\mid v_{ig}/p_g=\alpha_i\}$ denote the set of goods that maximize the bang-per-buck ratio for buyer $i$.
We refer to ${\rm MBB}_i$ as the {\it MBB set} for buyer $i$.

An outcome $(\bx, \bp)$ is said to be an {\it equilibrium} if it satisfies the following conditions:
\begin{itemize}
    \item {\it Market clearing}:
    Each good is either priced at zero or is allocated fully, i.e., 
    $\forall g\in M$, $p_g=0$ or $\sum_{i\in N} x_{ig}=1$.
    \item {\it Budget exhaustion}:
    Each buyer spends all of their budget, i.e., 
    $\forall i\in N, \bp(\bx_i)=b_i$.
    \item {\it Maximum bang-per-buck constraints}:
    Each buyer spends their budget only on their MBB set, i.e., 
    $\forall i\in N, \forall g\in M$, $x_{ig}>0 \implies g\in {\rm MBB}_i$.
\end{itemize}
An equilibrium outcome always exists under the assumption that for each good $j$, there exists an agent $i$ such that $v_{ij} > 0$, and every buyer values at least one good above $0$~\cite[Chapter 5]{NisaRougTardVazi07}.

The First Welfare Theorem, a fundamental and important theorem in economics, particularly in general equilibrium theory, implies that the equilibrium outcome in a Fisher market is fractionally Pareto optimal.
\begin{proposition}(First Welfare Theorem;~\cite[Chapter 16]{mas1995microeconomic})\label{prop:first}
For a linear Fisher market model, any equilibrium outcome is fractionally Pareto optimal (fPO).
\end{proposition}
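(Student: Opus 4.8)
The plan is to argue by contradiction. Suppose $(\bx,\bp)$ is an equilibrium outcome that is nonetheless Pareto dominated by some fractional allocation $\by$, i.e. $v_i(\by_i)\ge v_i(\bx_i)$ for every buyer $i\in N$ and $v_j(\by_j)>v_j(\bx_j)$ for some $j\in N$. The engine of the proof is the \emph{maximum-bang-per-buck bound}: for every buyer $i$ and every fractional bundle $\bv=(v_g)_{g\in M}$ with $0\le v_g\le 1$, one has $\sum_{g\in M} v_g\, v_{ig}\le \alpha_i\sum_{g\in M} v_g\, p_g$, since $v_{ig}\le \alpha_i p_g$ for every good $g$ by the definition of the MBB ratio (goods with $p_g=0$ are handled in the remark below). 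Moreover, the MBB constraints together with budget exhaustion force equality on the equilibrium bundle: $v_i(\bx_i)=\sum_{g} x_{ig}v_{ig}=\alpha_i\sum_g x_{ig}p_g=\alpha_i\,\bp(\bx_i)=\alpha_i b_i$.

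Next I would translate the Pareto inequalities into statements about spending under the prices $\bp$. Note first that $\alpha_i>0$ for every $i$: buyer $i$ values some good $g$ positively, that good has positive price at equilibrium, so $\alpha_i\ge v_{ig}/p_g>0$. Now for each $i$, combining $v_i(\by_i)\ge v_i(\bx_i)=\alpha_i b_i$ with the MBB bound $v_i(\by_i)\le \alpha_i\,\bp(\by_i)$ and dividing by $\alpha_i>0$ gives $\bp(\by_i)\ge b_i$; applying the same step to the witness $j$ with its strict inequality $v_j(\by_j)>\alpha_j b_j$ gives $\bp(\by_j)>b_j$. Summing over all buyers, $\sum_{i\in N}\bp(\by_i)>\sum_{i\in N} b_i$.

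Finally I would derive a contradiction from feasibility and market clearing. On one side, budget exhaustion and market clearing give $\sum_{i} b_i=\sum_{i}\bp(\bx_i)=\sum_{g\in M} p_g\sum_{i} x_{ig}=\sum_{g\in M} p_g$, because goods with $p_g=0$ contribute $0$ and every good with $p_g>0$ is fully sold. On the other side, since $\by$ is a fractional allocation we have $\sum_i y_{ig}\le 1$ for every $g$, so $\sum_{i}\bp(\by_i)=\sum_{g\in M} p_g\sum_{i} y_{ig}\le \sum_{g\in M} p_g$. Chaining the two, $\sum_i\bp(\by_i)\le \sum_{g} p_g=\sum_i b_i$, contradicting $\sum_i\bp(\by_i)>\sum_i b_i$. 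Hence no dominating $\by$ exists and $\bx$ is fPO.

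The one place that needs genuine care — and what I expect to be the main obstacle — is the treatment of goods priced at zero together with the convention $0/0=0$ for the bang-per-buck ratio. One must verify that at an equilibrium under the standing assumptions (every good desired by some agent, every agent desiring some good, positive budgets), no positively valued good can have price zero: if $p_g=0$ and $v_{ig}>0$ then $\alpha_i$ would be infinite, forcing $i$ to spend its whole budget on zero-priced goods and hence $b_i=0$, which is excluded. Once this degeneracy is ruled out, the bound $v_{ig}\le\alpha_i p_g$ holds simultaneously for all goods, every $\alpha_i$ is finite and positive, and the remainder is the short inequality chase above.
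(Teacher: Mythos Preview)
The paper does not give its own proof of this proposition; it is stated as a citation to \cite[Chapter~16]{mas1995microeconomic} and used as a black box. Your argument is the standard contradiction proof of the First Welfare Theorem specialised to the linear Fisher setting, and it is correct: the MBB inequality $v_{ig}\le\alpha_i p_g$ converts the Pareto-domination inequalities into spending inequalities, and market clearing plus budget exhaustion then forces the total-price contradiction. The degeneracy you single out (goods with price zero) is indeed the only delicate point; your resolution via positive budgets is the usual one and is sound, though note that the paper formally allows $b_i\ge 0$ rather than $b_i>0$---in the paper's actual application, however, prices are always strictly positive (Lemma~\ref{lem:price}) and only the MBB constraint is invoked, so the edge case never arises.
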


We similarly apply the concepts of price vectors and MBB sets within the framework of fair division problems.
For a given fair division instance $(N,M,V)$, we refer to the pair $(\bx, \bp)$ as a {\it solution}, where $\bx=(\bx_i)_{i\in N}$ is an allocation, and $\bp=(p_g)_{g\in M}$ is a price vector.

\paragraph{Price envy-freeness, minimum spenders, maximum violators}
Price envy-freeness (pEF), price envy-free up to one good (pEF1), minimum spenders, and violators are fundamental concepts introduced in \cite{barman2018finding}.
In this paper, we introduce a new concept termed the {\it maximum violator}, which plays a crucial role in our discussion.

Let $(N,M,V)$ be a fair division instance and let $(\bx, \bp)$ be a solution in $(N,M,V)$.
We say that an agent $i\in N$ {\it price envies} another agent $j\in N$ if $\bp(\bx_i)<\bp(\bx_j)$.
A solution $(\bx, \bp)$ is said to be {\it price envy-free} (pEF) if no agent price envies any other agent.
A solution $(\bx, \bp)$ is said to be {\it price envy-free up to one good} (pEF1) if for any pair of agents $i,j\in N$ where $i$ price envies $j$, there exists a good $g\in \bx_j$ such that $\bp(\bx_i)\ge \bp(\bx_j\setminus \{g\})$.

For any $S\subseteq M$, let $\hat{\bp}(S)$ denote the total price after removing the highest-priced good from $S$. Formally, $\hat{\bp}(S)$ is defined as follows:
\begin{align*}
  \hat{\bp}(S):=
  \left\{
    \begin{array}{ll}
      \min_{g\in S} \bp(S\setminus \{g\}) & {\rm if}~S \neq \emptyset,\\
      0 & {\rm otherwise}.
    \end{array}
  \right.
\end{align*}
Let us denote $\min(\bx, \bp):= \min_{i \in N} \bp(\bx_i)$ and $\max(\bx, \hat{\bp}):=\max_{i\in N}\hat{\bp}(\bx_i)$.
\begin{observation}\label{ob:pEF1}
A solution $(\bx, \bp)$ is pEF1 if and only if $\min(\bx, \bp)\ge \max(\bx, \hat{\bp})$ holds.
\end{observation}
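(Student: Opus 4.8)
The plan is to first rewrite the pEF1 condition in terms of $\hat{\bp}$. For a fixed agent $j$ with $\bx_j\neq\emptyset$, the clause ``there exists $g\in\bx_j$ with $\bp(\bx_i)\ge\bp(\bx_j\setminus\{g\})$'' holds precisely when $\bp(\bx_i)\ge\min_{g\in\bx_j}\bp(\bx_j\setminus\{g\})=\hat{\bp}(\bx_j)$, since the minimum is attained at some good. Moreover, whenever $i$ price envies $j$ we have $\bp(\bx_i)<\bp(\bx_j)$, which forces $\bx_j\neq\emptyset$ because prices are non-negative; so the degenerate case need not be treated separately. Hence $(\bx,\bp)$ is pEF1 if and only if, for every ordered pair $(i,j)$ with $\bp(\bx_i)<\bp(\bx_j)$, we have $\bp(\bx_i)\ge\hat{\bp}(\bx_j)$. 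I would also record the elementary inequality $\hat{\bp}(S)\le\bp(S)$ for all $S\subseteq M$: removing a good of non-negative price cannot increase the total, and the case $S=\emptyset$ is immediate from the definition.

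The forward direction ($\Leftarrow$) is immediate: if $\min(\bx,\bp)\ge\max(\bx,\hat{\bp})$ and $\bp(\bx_i)<\bp(\bx_j)$, then $\bp(\bx_i)\ge\min(\bx,\bp)\ge\max(\bx,\hat{\bp})\ge\hat{\bp}(\bx_j)$, which is exactly the reformulated pEF1 condition. For the reverse direction ($\Rightarrow$), assume $(\bx,\bp)$ is pEF1, and pick a minimum spender $i^*\in\argmin_{i\in N}\bp(\bx_i)$ and an agent $j^*\in\argmax_{j\in N}\hat{\bp}(\bx_j)$; it suffices to show $\bp(\bx_{i^*})\ge\hat{\bp}(\bx_{j^*})$. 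If $\bp(\bx_{i^*})<\bp(\bx_{j^*})$, then $i^*$ price envies $j^*$, and the reformulated pEF1 condition gives $\bp(\bx_{i^*})\ge\hat{\bp}(\bx_{j^*})$ directly. Otherwise $\bp(\bx_{i^*})\ge\bp(\bx_{j^*})\ge\hat{\bp}(\bx_{j^*})$ by the elementary inequality above; note this second case also subsumes $i^*=j^*$.

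The whole argument is bookkeeping, so there is no real obstacle; the only points requiring a little care are the degenerate situations --- empty bundles, where $\hat{\bp}$ is defined to equal $0$, and the possibility that the minimum spender and the agent maximizing $\hat{\bp}$ coincide (or, more generally, that no price-envy relation holds between them), in which case one cannot invoke pEF1 and must instead fall back on $\hat{\bp}(\bx_{j^*})\le\bp(\bx_{j^*})$.
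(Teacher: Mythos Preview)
Your proof is correct and follows essentially the same route as the paper's: the paper also rewrites pEF1 as the universal inequality $\bp(\bx_i)\ge\hat{\bp}(\bx_j)$ for all $i,j$, and then passes to $\min(\bx,\bp)\ge\max(\bx,\hat{\bp})$. Your version is simply more careful about the degenerate case where the minimum spender does not price envy the agent maximizing $\hat{\bp}$, which the paper's one-line derivation leaves implicit.
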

\begin{proof}
From the definition, we directly obtain 
    \begin{align*}
    (\bx, \bp)~\text{is pEF1}
     &\Longleftrightarrow \forall i, j \in N, \bp(\bx_i)\ge \hat{\bp}(\bx_j)  \\
     &\Longleftrightarrow \min(\bx, \bp)\ge \max(\bx, \hat{\bp}).
    \end{align*}
\end{proof}
We define {\it minimum spender set} $L(\bx, \bp):=\{i \in N\mid \bp(\bx_i)=\min(\bx, \bp)\}$ and
{\it maximum violator set} $K(\bx, \bp):=\{i \in N\mid \hat{\bp}(\bx_i)=\max(\bx, \hat{\bp})\}$ for a solution $(\bx, \bp)$.
When it is clear from the context, $L(\bx, \bp)$ and $K(\bx, \bp)$ may sometimes be abbreviated as $L$ and $K$.
\paragraph{MBB graph and augmented MBB graph}
For a given fair division instance $(N,M,V)$ and a price vector $\bp$, define the {\it MBB graph} as a directed bipartite graph $G(\bp)$ with an edge $(i,g)$ from agent $i$ to good $g$ if and only if $g$ is in the MBB set of agent $i$ at prices $\bp$, i.e., $g\in {\rm MBB}_i$.
We call an edge $(i,g)$ an {\it MBB edge}.
In addition, given an allocation $\bx$, define the {\it augmented MBB graph} $G(\bx, \bp)$ as a directed bipartite graph obtained from $G(\bp)$ by adding an edge $(g,i)$ from good $g$ to agent $i$, where $g\in \bx_i$.
We call an edge $(g,i)$ an {\it allocation edge}.
\begin{example}
Consider a fair division instance $(N,M,V)$ with three agents $\{1,2,3\}$ and five goods $\{1, 2, 3, 4, 5\}$. The values are given in Table~\ref{tb:1}.
\begin{table}[htbp]
    \centering
    \begin{tabular}{|c||c|c|c|c|c|}
        \hline
        & $1$ & $2$ & $3$ & $4$ & $5$ \\
        \hline 
        agent $1$ & 6 & 5 & 0 & 0 & 0 \\
        agent $2$ & 0 & 1 & 7 & 3 & 0 \\
        agent $3$ & 2 & 3 & 6 & 3 & 4 \\
        \hline
    \end{tabular}
    \caption{A fair division instance}
    \label{tb:1}
\end{table}

Consider the allocation $\bx=(\bx_i)_{i\in [3]}$ given by $\bx_1=\{1,2\}$, $\bx_2=\{3,4\}$, $\bx_3=\{5\}$ and the price vector $\bp=(6,5,7,3,4)$.
Figure~\ref{fig:mbb} describes the MBB graph $G(\bp)$ and the augmented MBB graph $G(\bx, \bp)$.
The minimum spender is agent $3$ and the maximum violator is agent $1$ in this situation.
\begin{figure}[h]
    \centering
    \includegraphics[width=0.7\textwidth]{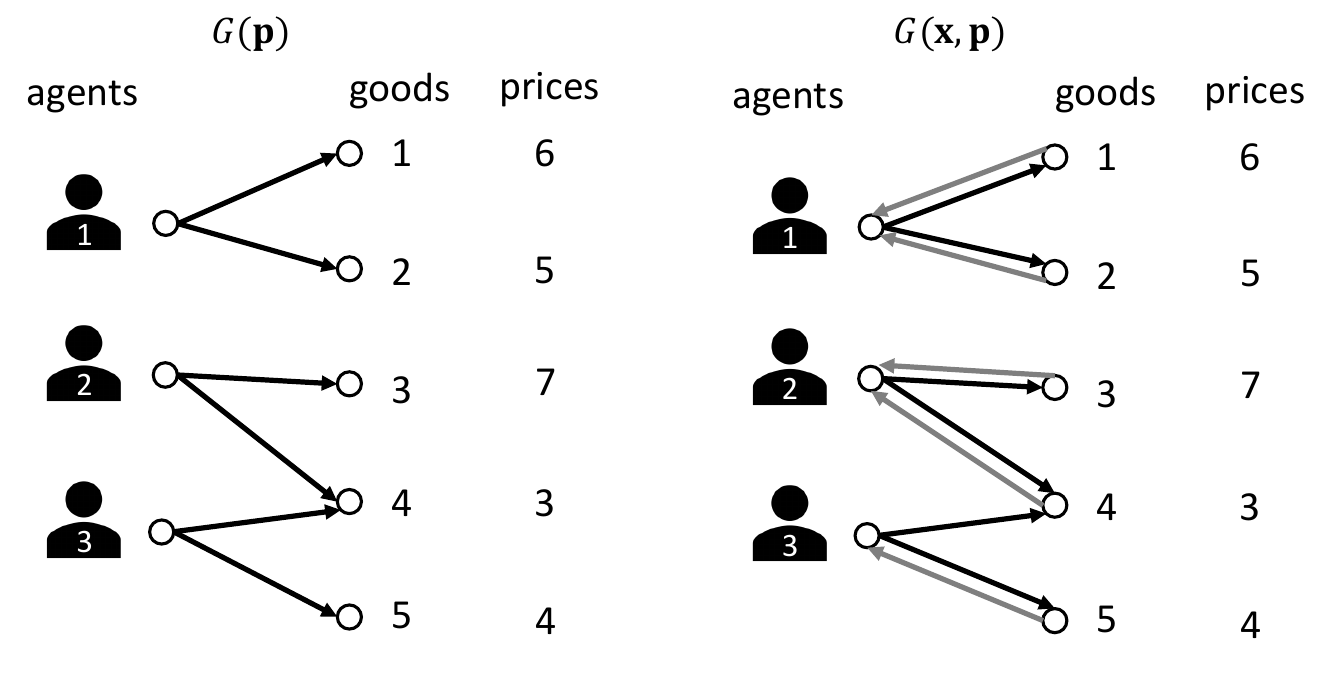}
    \caption{MBB graph $G(\bp)$ and augmented MBB graph $G(\bx, \bp)$. In $G(\bx,\bp)$, the black edges represent MBB edges, while the gray edges represent allocation edges.}
    \label{fig:mbb}
\end{figure}
\end{example}

The following lemma provides a sufficient condition for finding an allocation $\bx$ that satisfies both EF1 and fPO.
\begin{lemma}\cite{barman2018finding}\label{lem:pEF1}
Given a fair division instance $(N,M,V)$, 
let $(\bx, \bp)$ be a solution such that (1) $(\bx, \bp)$ is pEF1, and (2) $\bx_i\subseteq {\rm MBB}_i$ for each $i\in N$.
Then, $\bx$ is EF1 and fPO allocation in $(N,M,V)$.
\end{lemma}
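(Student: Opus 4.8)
The plan is to reinterpret the solution $(\bx,\bp)$ as an equilibrium of a suitably chosen linear Fisher market and then read off the two conclusions.

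For \emph{fPO}, I would set up the Fisher market instance $(N,M,V,B)$ in which buyer $i$ has budget $b_i := \bp(\bx_i)$, and verify that $(\bx,\bp)$ is an equilibrium outcome of this market. Market clearing is immediate because $\bx$ is a partition of $M$, so every good is fully allocated; budget exhaustion holds by the very choice $b_i=\bp(\bx_i)$; and the maximum bang-per-buck constraint is precisely hypothesis (2), since $x_{ig}>0$ implies $g\in\bx_i\subseteq{\rm MBB}_i$. Proposition~\ref{prop:first} then gives that $(\bx,\bp)$ is fPO, and since fractional Pareto domination depends only on the valuation functions $V$ (not on prices or budgets), $\bx$ is an fPO allocation of $(N,M,V)$.

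For \emph{EF1}, the key algebraic observation is that hypothesis (2) makes valuations proportional to prices on each agent's own bundle. Writing $\alpha_i:=\max_{g\in M} v_{ig}/p_g$ for the MBB ratio of agent $i$, we have $v_{ig}=\alpha_i p_g$ for every $g\in\bx_i$, hence $v_i(\bx_i)=\alpha_i\,\bp(\bx_i)$, whereas for any set $S\subseteq M$ the bang-per-buck bound $v_{ig}\le\alpha_i p_g$ yields $v_i(S)\le\alpha_i\,\bp(S)$. (I would first dispose of the degenerate cases: $\alpha_i>0$ because every agent values some good positively, and the case $\alpha_i=\infty$ forces $\bp(\bx_i)=0$ and is checked directly; generically all prices are positive and every $\alpha_i$ is finite.) Now consider any pair $i,j$ with $v_i(\bx_i)<v_i(\bx_j)$, i.e.\ $i$ envies $j$. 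Chaining the two bounds gives $\alpha_i\,\bp(\bx_i)=v_i(\bx_i)<v_i(\bx_j)\le\alpha_i\,\bp(\bx_j)$, so $\bp(\bx_i)<\bp(\bx_j)$ and $i$ also price-envies $j$. By pEF1 there is a good $g\in\bx_j$ with $\bp(\bx_i)\ge\bp(\bx_j\setminus\{g\})$, and then $v_i(\bx_j\setminus\{g\})\le\alpha_i\,\bp(\bx_j\setminus\{g\})\le\alpha_i\,\bp(\bx_i)=v_i(\bx_i)$, which is exactly the EF1 condition for this pair. As $i,j$ were arbitrary, $\bx$ is EF1.

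The substantive point — and where care is needed — is the translation between the price world and the value world: pEF1 is about prices while EF1 is about valuations, and the only link is the one-sided MBB inequality $v_{ig}\le\alpha_i p_g$, which becomes an equality only on an agent's own bundle. The argument must exploit this asymmetry in the correct direction, namely to show that value-envy implies price-envy and that a pEF1 witness good is also an EF1 witness good; the degenerate configurations (zero prices, infinite MBB ratios, empty bundles) require a short separate check, and the rest is bookkeeping.
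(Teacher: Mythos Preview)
Your argument is correct and is precisely the standard proof of this lemma as given in~\cite{barman2018finding}: fPO follows by exhibiting $(\bx,\bp)$ as a Fisher market equilibrium with budgets $b_i=\bp(\bx_i)$ and invoking the First Welfare Theorem (Proposition~\ref{prop:first}), and EF1 follows from the chain $v_i(\bx_i)=\alpha_i\bp(\bx_i)$, $v_i(S)\le\alpha_i\bp(S)$, which converts pEF1 into EF1. The present paper does not supply its own proof of Lemma~\ref{lem:pEF1} but simply cites~\cite{barman2018finding}, so there is nothing further to compare; your proposal matches the cited source.
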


\paragraph{Instances that satisfy Hall's condition}

We assume that the given fair division instance satisfies {\it Hall's condition}~\cite{hall1987representatives}, as used in~\cite{barman2018finding}. More precisely, for a given fair division instance $(N, M, V)$, we define the {\it valuation graph} $\mathcal{G} = (N, M; E)$ as an unweighted bipartite graph, where $E := \{(i, g) \mid v_{ig} > 0 \}$. The instance $(N, M, V)$ is said to {\it satisfy Hall's condition} if, for any subset $A \subseteq N$, $|A| \le |\Gamma(A)|$ holds, where $\Gamma(A)$ denotes the set of goods valued positively by at least one agent in $A$, i.e., $\Gamma(A) := \{g\in M \mid \exists i \in N ~{\rm s.t.}~ (i,g)\in E \}$.

If the given instance does not satisfy Hall's condition, we apply an appropriate preprocessing step to reduce it to a smaller instance that does satisfy Hall's condition. 
Note that this transformation can be computed in polynomial time in $n$ and $m$.
For further details, we refer to the full version of~\cite{barman2018finding}.
From now on, we assume that the given fair division instance satisfies Hall's condition.

\section{Our Algorithms}\label{sec:alg}
In this section, we provide a detailed description of our algorithms.
\paragraph{Main algorithm}
Our main algorithm (Algorithm \ref{alg:1}) computes an EF1 and fPO allocation $\bx$ for a given fair division instance $(N,M,V)$.
To achieve this, the algorithm constructs an allocation $\bx=(\bx_i)_{i\in N}$ and a price vector $\bp=(p_g)_{g\in M}$ such that (1) $(\bx, \bp)$ is pEF1, and (2) $\bx_i\subseteq {\rm MBB}_i$ for each $i\in N$.
By Lemma~\ref{lem:pEF1}, we conclude that $\bx$ is an EF1 and fPO allocation for the instance.

The algorithm constructs a solution $(\bx, \bp)$ that satisfies these conditions by sequentially adding agents to the current instance and modifying the solution accordingly at each step.
Initially, the set of agents $\mathcal{N}$, the set of goods $\mathcal{M}$, and the set of valuation functions $\mathcal{V}$ are empty. At the $k$-th iteration, the allocation $\bx_k$ is set to $M_k$, where $M_k$ denotes the set of goods not yet included in the current instance that are positively valued by agent $k$, i.e., $M_k := \{g\in M\setminus \mathcal{M} \mid v_{kg} > 0\}$.

For each good $g\in M_k$, the price $p_g$ is set to $\frac{v_{kg}\cdot \min_{h\in \mathcal{M}} p_h}{m\cdot \max_{h\in M} v_{kh}}$\footnote{If $\mathcal{M}=\emptyset$, we define $\min_{h\in \mathcal{M}} p_h$ to be $1$.}.
Intuitively, these prices are set low enough to ensure that $M_k$ is included in the MBB set for agent $k$, and none of the agents already added to the instance price envy agent $k$. For further details, refer to the proof of Theorem~\ref{thm:correctness}.
The algorithm then updates the current instance by adding agent $k$, the set of goods $M_k$, and the valuation function $v_k$.

After this, the algorithm calls the \textit{FindSolution} algorithm (Algorithm~\ref{alg:2}) to compute a new solution $(\bx,\bp)$ for the current instance $(\mathcal{N}, \mathcal{M}, \mathcal{V})$ that satisfies the following two conditions: (1) $(\bx, \bp)$ is pEF1 in $(\mathcal{N}, \mathcal{M}, \mathcal{V})$, and (2) $\bx_i\subseteq {\rm MBB}_i$ for each $i\in {\mathcal{N}}$.

After all iterations, the algorithm returns the final allocation $\bx=(\bx_i)_{i\in \mathcal{N}}$. 
\begin{remark}
    Readers may wonder why the algorithm incrementally adds the set of goods $M_k$ to the instance, rather than setting $\mathcal{M} = M$ from the beginning. The reason is to ensure that all goods are assigned non-zero prices. This is crucial for guaranteeing that the price increase rate in the following \textit{FindSolution} algorithm is bounded, i.e., $\beta < \infty$.
\end{remark}
The following observation will be used to ensure that the price increase  rate in the \textit{FindSolution} algorithm is bounded.
\begin{observation}\label{ob:hall}
    At the beginning of each iteration in Algorithm~\ref{alg:1},  
    the current instance $(\mathcal{N}, \mathcal{M}, \mathcal{V})$ satisfies Hall's condition.
\end{observation}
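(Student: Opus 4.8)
The plan is to prove the observation by an explicit description of the good set $\mathcal{M}$ present at the start of the $k$-th iteration, followed by a reduction of Hall's condition for the current instance to Hall's condition for the original instance $(N,M,V)$, which is assumed to hold. The base case is trivial: at the start of the first iteration $\mathcal{N}=\mathcal{M}=\emptyset$, so Hall's condition holds vacuously.

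First I would show that, at the beginning of the $k$-th iteration, $\mathcal{M}$ equals the set $\Gamma([k-1])$ of all goods positively valued by at least one of the agents $1,\dots,k-1$. The inclusion $\mathcal{M}\subseteq\Gamma([k-1])$ is immediate, since every good placed into some $M_j$ with $j\le k-1$ satisfies $v_{jg}>0$. For the reverse inclusion, take $g\in\Gamma([k-1])$ and let $j$ be the smallest index in $[k-1]$ with $v_{jg}>0$; by minimality of $j$, the good $g$ belongs to no $M_l$ with $l<j$, hence $g\notin\mathcal{M}$ at the start of iteration $j$, and therefore $g$ is added to $\mathcal{M}$ during iteration $j$ (as $g\in M\setminus\mathcal{M}$ and $v_{jg}>0$). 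So $g\in\mathcal{M}$ by the start of iteration $k$.

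Next, fix an arbitrary subset $A\subseteq\mathcal{N}=[k-1]$. I claim that the set of goods of the current instance positively valued by some agent of $A$ is exactly $\Gamma(A)$. Indeed, if a good $g\in\mathcal{M}$ has $v_{ig}>0$ for some $i\in A$, then $g\in\Gamma(A)$ by definition; conversely, any $g\in\Gamma(A)$ is positively valued by some agent of $A\subseteq[k-1]$, so $g\in\Gamma([k-1])=\mathcal{M}$ by the previous step, i.e.\ $g$ is a good of the current instance. Since $A\subseteq N$ and $(N,M,V)$ satisfies Hall's condition, $|\Gamma(A)|\ge|A|$, which is precisely Hall's condition for the current instance $(\mathcal{N},\mathcal{M},\mathcal{V})$.

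The only point requiring care is the bookkeeping in the first step — that each good enters $\mathcal{M}$ exactly at the first iteration in which an already-processed agent values it positively — together with the accompanying remark that restricting the ground set of goods to $\mathcal{M}$ does not shrink the neighborhood of any subset of $\{1,\dots,k-1\}$ in the valuation graph. Granting these, the reduction to Hall's condition for the original instance is immediate, and no further case analysis or estimates are needed.
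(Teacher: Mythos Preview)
Your proof is correct and follows essentially the same approach as the paper's: both arguments reduce Hall's condition on $(\mathcal{N},\mathcal{M},\mathcal{V})$ to Hall's condition on the original instance $(N,M,V)$ by observing that every good positively valued by some agent in $[k-1]$ already lies in $\mathcal{M}$, so neighborhoods in the restricted valuation graph coincide with those in the full graph. You are somewhat more explicit (establishing the equality $\mathcal{M}=\Gamma([k-1])$ and $\Gamma'(A)=\Gamma(A)$, whereas the paper only states the one inclusion needed), but the underlying idea is identical.
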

\begin{proof}
At the beginning of the first iteration, since $\mathcal{N}$ and $\mathcal{M}$ are empty sets, it is clear that $(\mathcal{N}, \mathcal{M}, \mathcal{V})$ satisfies Hall's condition. 
Fix any $k \in [n-1]$ and consider the beginning of the $k+1$-th iteration of Algorithm~\ref{alg:1}. 
At this point, we have $\mathcal{N} = [k]$, $\mathcal{M} = \cup_{j=1}^k M_j$, and $\mathcal{V} = \{v_j\}_{j \in [k]}$. 
Let $\mathcal{G} = (N, M; E)$ be the valuation graph for the instance $(N, M, V)$, and let $\mathcal{G}' = (\mathcal{N}, \mathcal{M}; \mathcal{E})$ be the valuation graph for $(\mathcal{N}, \mathcal{M}, \mathcal{V})$. 
Since all goods with positive value for any agent in $[k]$ are included in $\mathcal{M}$, it follows that $\forall i \in [k], \forall g \in M$, $(i, g) \in E \implies (i, g) \in \mathcal{E}$. 
Therefore, since $(N, M, V)$ satisfies Hall's condition by the assumption, $(\mathcal{N}, \mathcal{M}, \mathcal{V})$ also satisfies Hall's condition.

\end{proof}
\begin{algorithm}[tb]
\caption{Main Algorithm}
\label{alg:1}
\begin{algorithmic}[1]
\Require Fair division instance $(N, M, V)$
\Ensure An EF1 and fPO allocation $\bx$
\State $\mathcal{N}\leftarrow \emptyset$, $\mathcal{M}\leftarrow \emptyset$ $\mathcal{V}\leftarrow \emptyset$
\Comment{The current set of agents, goods, and valuation functions}
\For{$k=1, 2,\cdots, n$}
\State Let $M_k = \{g\in M\setminus \mathcal{M} \mid v_{kg} > 0\}$.
\State $\bx_k \leftarrow M_k$, and $p_g \leftarrow \frac{v_{kg}\cdot \min_{h\in \mathcal{M}} p_h}{m\cdot \max_{h\in M} v_{kh}}$ for any $g \in M_k$
\State $\mathcal{N}\leftarrow \mathcal{N}\cup \{k\}$, $\mathcal{M}\leftarrow \mathcal{M}\cup M_k$, $\mathcal{V}\leftarrow \mathcal{V}\cup \{v_k\}$
\State $(\bx,\bp)\leftarrow$ {\it FindSolution}$(\mathcal{N}, \mathcal{M}, \mathcal{V},\bx, \bp)$

\EndFor
\Return $\bx$
\end{algorithmic}
\end{algorithm}

\paragraph{The \textit{FindSolution} algorithm}
The \textit{FindSolution} algorithm (Algorithm \ref{alg:2}) modifies the current allocation and price vector in a fair division instance $(\mathcal{N}, \mathcal{M}, \mathcal{V})$ to ensure that the resulting solution $(\bx,\bp)$ satisfies the following two conditions: (1) $(\bx, \bp)$ is pEF1, and (2) $\bx_i\subseteq {\rm MBB}_i$ for each $i\in \mathcal{N}$.

Let $k$ be the most recently added agent to the set $\mathcal{N}$.
The algorithm maintains the invariant that $(\bx, \bp)$ is pEF1 for all agents except $k$ and that $\bx_i \subseteq {\rm MBB}_i$ for each $i\in \mathcal{N}$.
More precisely, the first condition means that $\bp(\bx_i)\ge \max(\bx,\hat{\bp})$ holds for any $i \in \mathcal{N}\setminus \{k\}$ by Observation~\ref{ob:pEF1}.

The algorithm iterates while the current solution $(\bx,\bp)$ is not pEF1 in $(\mathcal{N}, \mathcal{M}, \mathcal{V})$.
Let $L$ be the minimum spender set and let $K$ be the maximum violator set for the current solution $(\bx, \bp)$ in $(\mathcal{N}, \mathcal{M}, \mathcal{V})$, i.e., $L=\{i \in \mathcal{N} \mid \bp(\bx_i)=\min(\bx, \bp)\}$ and $K=\{i \in \mathcal{N} \mid \hat{\bp}(\bx_i)=\max(\bx, \hat{\bp})\}$.
Note that, as shown later in Corollary~\ref{cor:L={k}}, since all agents except $k$ satisfy the pEF1 conditions and $(\bx,\bp)$ is not pEF1, the minimum spender set must be $L=\{k\}$.

We define $R_{\mathcal{N}}$ as the set of agents reachable from $L$ via  directed paths in the augmented MBB graph $G(\bx, \bp)$, and $R_{\mathcal{M}}$ as the set of goods reachable from $L$ via directed paths in $G(\bx, \bp)$.

If $K\cap R_{\mathcal{N}}\neq \emptyset$, meaning that there exists at least one maximum violator that can be reached from $L$ via directed paths in $G(\bx, \bp)$, the algorithm calls the \textit{Transfer} algorithm (Algorithm~\ref{alg:3}) to modify the current allocation $\bx$.

Otherwise, the algorithm uniformly increases the prices of all goods in $R_{\mathcal{M}}$ until one of the following events occurs: (i) an MBB edge appears between $R_{\mathcal{N}}$ and $\mathcal{M} \setminus R_{\mathcal{M}}$ in $G(\bx,\bp)$, (ii) an agent in $R_{\mathcal{N}}$ becomes a maximum violator, or (iii) the solution $(\bx,\bp)$ satisfies pEF1.

Let $\beta_1$, $\beta_2$, and $\beta_3$ denote the price increase rates corresponding to the first, second, and third events, respectively. These rates are defined as follows:
\begin{flalign*}
    \beta_1 &:= \min_{j\in R_{\mathcal{N}}, g\in \mathcal{M} \setminus R_{\mathcal{M}}} \frac{p_g\cdot \alpha_j}{v_{jg}}, \\
    \beta_2 &:= \min_{j\in R_{\mathcal{N}}} \frac{\max(\bx, \hat{\bp})}{\hat{\bp}(\bx_j)}, \\
    \beta_3 &:= \frac{\max(\bx, \hat{\bp})}{\bp(\bx_k)}.
\end{flalign*}
The overall price increase rate $\beta$ is given by $\beta = \min \{\beta_1, \beta_2, \beta_3\}$.
The algorithm multiplies the prices of all goods in $R_{\mathcal{M}}$ by $\beta$.
The following lemma guarantees that the prices of all goods is always positive, finite and non-decreasing.
\begin{lemma}\label{lem:price}
    Throughout the execution of Algorithm~\ref{alg:1}, the price of all goods is positive, finite, and non-decreasing.
\end{lemma}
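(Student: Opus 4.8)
The plan is to track the three ways a price can change during Algorithm~\ref{alg:1} — initialization of a good's price when its owning agent is added, a uniform price increase inside \textit{FindSolution}, and (implicitly) the \textit{Transfer} steps which do not change prices at all — and argue that each operation preserves positivity, finiteness, and monotonicity of every price. Since prices are only ever initialized once (to a strictly positive value, as I argue below) and thereafter only multiplied by factors $\beta \ge 1$, the three properties follow by induction on the sequence of operations.

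First I would handle \emph{positivity and finiteness at initialization}. When agent $k$ is added in iteration $k$, each good $g \in M_k$ receives price $p_g = \frac{v_{kg}\cdot \min_{h\in \mathcal{M}} p_h}{m\cdot \max_{h\in M} v_{kh}}$. By definition of $M_k$ we have $v_{kg} > 0$, and by the standing assumption every agent values some good positively, so $\max_{h\in M} v_{kh} > 0$; also $m$ is a finite positive integer. For the factor $\min_{h\in\mathcal{M}} p_h$: if $\mathcal{M} = \emptyset$ this is defined to be $1 > 0$; otherwise, by the inductive hypothesis all previously set prices are positive and finite, and $\mathcal{M}$ is finite, so $\min_{h\in\mathcal{M}} p_h$ is a positive real. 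Hence the newly assigned $p_g$ is a positive finite real. Note also that goods are never removed and never re-priced from scratch, so each good is initialized exactly once.

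Next I would handle the \emph{price-increase step} of \textit{FindSolution}. There, the prices of goods in $R_{\mathcal{M}}$ are multiplied by $\beta = \min\{\beta_1,\beta_2,\beta_3\}$, and prices outside $R_{\mathcal{M}}$ are unchanged. It suffices to show $1 \le \beta < \infty$. The bound $\beta \ge 1$ is immediate from the structure of the algorithm: the price increase is only performed when $K \cap R_{\mathcal{N}} = \emptyset$, so no agent in $R_{\mathcal{N}}$ is currently a maximum violator, giving $\hat{\bp}(\bx_j) < \max(\bx,\hat\bp)$ for $j \in R_{\mathcal{N}}$ (so $\beta_2 \ge 1$), and similarly $L = \{k\}$ with $\bp(\bx_k) = \min(\bx,\bp) < \max(\bx,\hat\bp)$ since $(\bx,\bp)$ is not pEF1 (so $\beta_3 \ge 1$); for $\beta_1$, for $g \in \mathcal{M}\setminus R_{\mathcal{M}}$ the edge $(j,g)$ is not an MBB edge of $j$, so $v_{jg}/p_g < \alpha_j$, i.e. $\frac{p_g\alpha_j}{v_{jg}} > 1$ (using that all current prices are positive by induction). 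For finiteness, $\beta_2$ and $\beta_3$ are finite because the denominators $\hat{\bp}(\bx_j)$ and $\bp(\bx_k)$ are finite (sums of finitely many finite prices); the only subtlety, and the main obstacle, is ruling out $\beta_1 = \infty$, which would happen if the minimum in $\beta_1$ were over an empty set, i.e. if every good is reachable ($\mathcal{M}\setminus R_{\mathcal{M}} = \emptyset$) — but then all of $\mathcal{M}$ would be reachable from $L = \{k\}$, and by Hall's condition (Observation~\ref{ob:hall}) together with the fact that $\bx_i \subseteq \mathrm{MBB}_i$ is maintained, one shows $R_{\mathcal{N}}$ must contain an agent outside $L$ with positive spending, which would force a maximum violator into $R_{\mathcal{N}}$ (contradicting the case assumption $K \cap R_{\mathcal{N}} = \emptyset$) or already give pEF1; this is exactly the reachability/Hall argument the paper flags in the remark preceding the lemma, and I would make it precise using that all prices are positive so that every agent in $R_{\mathcal{N}}$ with nonempty bundle has positive spending. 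Finally, since \textit{Transfer} only reallocates goods without touching $\bp$, combining the three cases and inducting over the (finitely many, per iteration) operations yields that throughout Algorithm~\ref{alg:1} every price stays positive, finite, and non-decreasing. $\qed$
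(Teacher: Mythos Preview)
Your overall structure (induction over price-changing operations) matches the paper's, and your treatment of initialization and of $\beta > 1$ is fine. The gap is in your finiteness argument for the price-increase step.

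You write that ``$\beta_2$ and $\beta_3$ are finite because the denominators $\hat{\bp}(\bx_j)$ and $\bp(\bx_k)$ are finite (sums of finitely many finite prices).'' But a finite denominator is not the issue; a \emph{zero} denominator is. We have $\hat{\bp}(\bx_j)=0$ whenever $|\bx_j|\le 1$, and $\bp(\bx_k)=0$ whenever $\bx_k=\emptyset$ (which can occur, e.g.\ if $M_k=\emptyset$ because all goods agent $k$ values positively were already in $\mathcal{M}$). So $\beta_2$ and $\beta_3$ can each be $+\infty$, and your claim that only $\beta_1$ is problematic collapses. Moreover, your argument for $\beta_1$ only rules out an empty index set $\mathcal{M}\setminus R_{\mathcal{M}}$; it does not rule out $v_{jg}=0$ for every $j\in R_{\mathcal{N}}$, $g\in \mathcal{M}\setminus R_{\mathcal{M}}$, which also makes every term in the minimum infinite.

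The paper fixes this with a three-way case split on the bundle sizes of agents in $R_{\mathcal{N}}$: if some $j\in R_{\mathcal{N}}$ holds at least two goods then $\hat{\bp}(\bx_j)>0$ (by positivity of prices) and $\beta_2<\infty$; if every $j\in R_{\mathcal{N}}$ holds exactly one good then in particular $\bp(\bx_k)>0$ and $\beta_3<\infty$; otherwise some $j\in R_{\mathcal{N}}$ holds no goods, and \emph{this} is where Hall's condition (Observation~\ref{ob:hall}) is actually invoked: since $|R_{\mathcal{N}}|>|R_{\mathcal{M}}|$ would violate it, there must be an MBB-unreached good $g\in\mathcal{M}\setminus R_{\mathcal{M}}$ with $v_{jg}>0$ for some $j\in R_{\mathcal{N}}$, giving $\beta_1<\infty$. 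Your sketch places Hall's condition at the wrong step (ruling out $\mathcal{M}\setminus R_{\mathcal{M}}=\emptyset$, which already follows from $K\cap R_{\mathcal{N}}=\emptyset$ and non-pEF1 without Hall) and never supplies the case analysis that actually makes $\beta<\infty$.
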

\begin{proof}
Note that Price changes occur only in line 10 of the {\it FindSolution} algorithm.
First, we show that if $1 < \beta < \infty$ holds in Algorithm~\ref{alg:2}, then the initial price of each good must be positive. Indeed, since we assume that the instance $(N,M,V)$ satisfies Hall's condition, we have $\max_{h \in M} v_{ih} > 0$ for each $i \in N$. In the initial iteration of Algorithm~\ref{alg:1}, the price of each $g \in M_1$ is set as $p_g = \frac{v_{1g}}{m \cdot \max_{h \in M} v_{1h}}$, which is positive since $v_{1g} > 0$. Since $1 < \beta < \infty$ holds in Algorithm~\ref{alg:2}, we can conclude inductively that $\min_{h \in \mathcal{M}} p_h > 0$ at the beginning of each iteration of Algorithm~\ref{alg:1}, ensuring that the initial price of each good must be positive.

Now, we show that $1<\beta <\infty$ holds in Algorithm~\ref{alg:2}.
First, we show that $\beta_1, \beta_2$ and $\beta_3$ are well-defined and greater than $1$.
Since $L \neq \emptyset$ and $L\subseteq R_{\mathcal{N}}$, we have $R_{\mathcal{N}} \neq \emptyset$.
Furthermore, since $K \neq \emptyset$ and $K \cap R_{\mathcal{N}} = \emptyset$, we have $K\subseteq \mathcal{N} \setminus R_{\mathcal{N}}$, which implies that $\mathcal{N} \setminus R_{\mathcal{N}} \neq \emptyset$.  
Additionally, since $(\bx, \bp)$ is not pEF1, the maximum violators must have at least one good, implying that $\mathcal{M} \setminus R_{\mathcal{M}} \neq \emptyset$.

From the definition of $R_{\mathcal{M}}$, we observe that there are no MBB edges between $R_{\mathcal{N}}$ and $\mathcal{M} \setminus R_{\mathcal{M}}$.  
This means that for any $j \in R_{\mathcal{N}}$ and any $g \in \mathcal{M} \setminus R_{\mathcal{M}}$, $v_{jg}/p_g < \alpha_j$ holds, where $\alpha_j$ is MBB ratio for agent $j$. 
Consequently, $\beta_1$ is well-defined and satisfies $\beta_1 > 1$.

Next, since $R_{\mathcal{N}} \neq \emptyset$, $\beta_2$ is well-defined.
As $K \cap R_{\mathcal{N}} = \emptyset$, it follows that $\hat{\bp}(\bx_j) < \max(\bx, \hat{\bp})$ for any $j\in R_{\mathcal{N}}$, implying  $\beta_2 > 1$.  

Finally, $\beta_3$ is obviously well-defined.
As $L=\{k\}$, we have $\bp(\bx_k)< \max(\bx, \hat{\bp})$, implying that $\beta_3 > 1$.
Combining the above results, we have $\beta > 1$.

We next show that $\beta < \infty$ by considering the following three cases:
\begin{itemize}
    \item There exists an agent $j \in R_{\mathcal{N}}$ who holds more than one good.\\
    Since the prices of all goods are positive, we have $\hat{\bp}(\bx_j) > 0$, implying $\beta_2 < \infty$, and hence $\beta < \infty$.
    \item Every agent $j \in R_{\mathcal{N}}$ holds exactly one good.\\
    Since $\bp(\bx_k) > 0$, it follows that $\beta_3 < \infty$, and hence $\beta < \infty$.
    \item Otherwise.\\
    In this case, every agent in $R_{\mathcal{N}}$ holds at most one good, and there exists an agent $j \in R_{\mathcal{N}}$ who holds no goods.
    By Observation~\ref{ob:hall}, the instance $(\mathcal{N}, \mathcal{M}, \mathcal{V})$ satisfies Hall’s condition. Thus, there exists an agent $j \in R_{\mathcal{N}}$ and a good $g \in \mathcal{M} \setminus R_{\mathcal{M}}$ such that $v_{jg} > 0$ by Hall’s condition. This implies that $\beta_1 < \infty$, and hence $\beta < \infty$.
\end{itemize}
Thus, in all cases, $1 < \beta < \infty$ holds.
This completes the proof of Lemma~\ref{lem:price}.
\end{proof}

\begin{algorithm}[htb]
\caption{{\it FindSolution}$(\mathcal{N}, \mathcal{M}, \mathcal{V},\bx, \bp)$}
\label{alg:2}
\begin{algorithmic}[1]
\Require A fair division instance $(\mathcal{N}, \mathcal{M}, \mathcal{V})$, and a solution $(\bx, \bp)$ in this instance such that $(\bx, \bp)$ is pEF1 except for the most recently added agent, and $\bx_i\subseteq {\rm MBB}_i$ for each $i\in \mathcal{N}$.
\Ensure A solution ($\bx'$, $\bp'$) such that $(\bx', \bp')$ is pEF1 and $\bx_i\subseteq {\rm MBB}_i$ for each $i\in \mathcal{N}$.
\While{the current solution $(\bx,\bp)$ is not pEF1}
\newline\Comment{Invariant: $(\bx, \bp)$ is pEF1 except for the most recently added agent, and $\bx_i\subseteq {\rm MBB}_i$ for each $i\in \mathcal{N}$.}
\If{$K\cap R_{\mathcal{N}}\neq \emptyset$}
\State $\bx \leftarrow$ {\it Transfer}$(\mathcal{N}, \mathcal{M}, \mathcal{V},\bx, \bp)$
\Else
\State $\beta_1 \leftarrow \min_{j\in R_{\mathcal{N}},g\in \mathcal{M} \setminus R_{\mathcal{M}}} \frac{p_g\cdot \alpha_j}{v_{jg}}$
\newline\Comment{$\beta_1$ corresponds to increasing prices until an MBB edge appears between $R_{\mathcal{N}}$ and $\mathcal{M} \setminus R_{\mathcal{M}}$.}
\State $\beta_2 \leftarrow \min_{j\in R_{\mathcal{N}}} \frac{\max(\bx, \hat{\bp})}{\hat{\bp}(\bx_j)}$
\newline\Comment{$\beta_2$ corresponds to increasing prices until an agent in $R_{\mathcal{N}}$ becomes a maximum violator.}
\State $\beta_3 \leftarrow \frac{\max(\bx, \hat{\bp})}{\bp(\bx_k)}$, where $k$ is the most recently added agent.
\newline\Comment{$\beta_3$ corresponds to increasing prices until $(\bx,\bp)$ satisfies pEF1.}
\State $\beta \leftarrow \min \{\beta_1, \beta_2, \beta_3\}$
\For{$\forall g \in R_{\mathcal{M}}$}
\State $p_g\leftarrow \beta \cdot p_g$
\EndFor
\EndIf
\EndWhile
\Return $(\bx, \bp)$
\end{algorithmic}
\end{algorithm}

\paragraph{The \textit{Transfer} algorithm}
The \textit{Transfer} algorithm (Algorithm \ref{alg:3}) modifies the current allocation $\bx$ to a new allocation $\bx'$ in the instance $(\mathcal{N}, \mathcal{M}, \mathcal{V})$, while preserving the invariant that $(\bx', \bp)$ is pEF1 except for the most recently added agent and that $\bx'_i\subseteq {\rm MBB}_i$ for each $i\in \mathcal{N}$.

The algorithm finds a directed path $P=(i_0, g_1, i_1, g_2,\ldots ,i_{\ell-1}, g_\ell, i_\ell)$ of the {\it shortest} length in the augmented MBB graph $G(\bx, \bp)$, where $i_0\in L$ and $i_\ell \in K$.
Note that, as shown later in Corollary~\ref{cor:L={k}}, the minimum spender set must be $L=\{k\}$.
Since the \textit{Transfer} algorithm is called only when the condition $K\cap R_{\mathcal{N}}\neq \emptyset$ holds, there is guaranteed to be at least one such directed path $P$.

Next, it identifies two critical indices along the path $P$. The first index, $a \in [\ell]$, is the smallest index such that $\bp(\bx_{i_a}\setminus \{g_a\})\ge \max(\bx, \hat{\bp})$. This means that $i_a$ is the closest agent to $i_0$ on the path $P$ that satisfies this condition.
Since $i_\ell \in K$, we have $\bp(\bx_{i_\ell}\setminus \{g_\ell \})\ge \max(\bx, \hat{\bp})$, and thus there exists an index $a\in [\ell]$ that satisfies this condition. 

The second index, $b\in [a-1]$, is the largest index prior to $a$ such that $\max(\bx, \hat{\bp}) \ge \bp(\bx_{i_b}\cup \{g_{b+1}\} \setminus \{g_b\})$.
This means that $i_b$ is the closest agent to $i_a$ among the agents between $i_1$ and $i_{a-1}$ that satisfies this condition.
If no such $b$ exists, it is set to $0$.

The algorithm then reallocates the goods along the identified path.
Specifically, agent $i_a$ relinquishes $g_a$, agent $i_b$ acquires $g_{b+1}$, and each intermediate agent $i_c$ relinquishes $g_c$ and acquires $g_{c+1}$.
The allocations for the remaining agents remain unchanged.

Finally, the algorithm returns the updated allocation $\bx'$.
Figure~\ref{fig:transfer} describes this reallocation process along the identified path in the augmented MBB graph $G(\bx, \bp)$.

\begin{figure}[t!]
    \centering
    \includegraphics[width=0.6\textwidth]{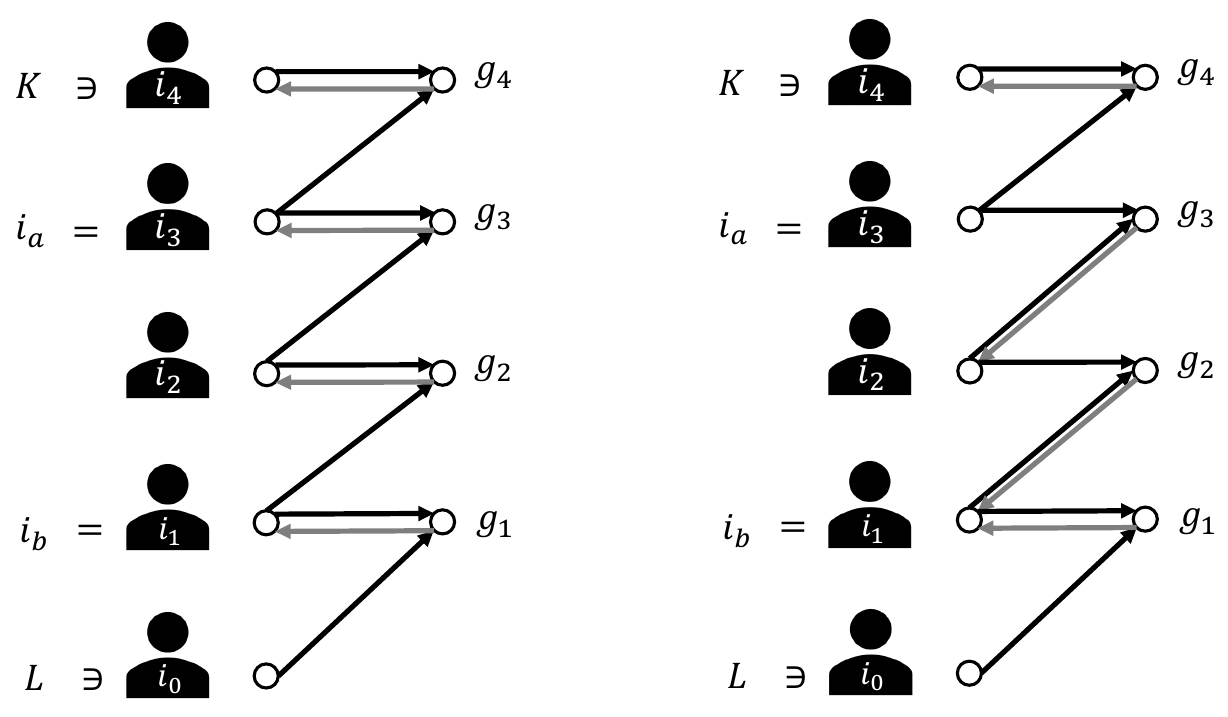}
    \caption{Illustration of the Transfer algorithm (Algorithm~\ref{alg:3}). The left figure shows the state before applying the Transfer algorithm, and the right figure shows the state after applying it. The black edges represent MBB edges, while the gray edges represent allocation edges in the augmented MBB graph $G(\bx, \bp)$.}
    \label{fig:transfer}
\end{figure}

\begin{algorithm}[h!]
\caption{{\it Transfer}$(\mathcal{N}, \mathcal{M}, \mathcal{V},\bx, \bp)$}
\label{alg:3}
\begin{algorithmic}[1]
\Require A fair division instance $(\mathcal{N}, \mathcal{M}, \mathcal{V})$, and a solution $(\bx, \bp)$ in this instance such that $(\bx, \bp)$ is pEF1 except for the most recently added agent, and $\bx_i\subseteq {\rm MBB}_i$ for each $i\in \mathcal{N}$.
\Ensure An allocation $\bx'=(\bx'_i)_{i \in \mathcal{N}}$ such that $(\bx', \bp)$ is pEF1 except for the most recently added agent, and $\bx_i\subseteq {\rm MBB}_i$ for each $i\in \mathcal{N}$.
\State Let $L$ be the set of minimum spenders and $K$ the set of maximum violators for $(\bx, \bp)$.
\State Find a shortest directed path $P=(i_0, g_1, i_1, g_2,\ldots ,i_{\ell-1}, g_\ell, i_\ell)$ in the augmented MBB graph $G(\bx, \bp)$, where $i_0\in L$ and $i_\ell \in K$.
\State Let $a \in [\ell]$ be the smallest index such that $\bp(\bx_{i_a}\setminus \{g_a\})\ge \max(\bx, \hat{\bp})$.
\State Let $b \in [a-1]$ be the largest index such that $\max(\bx, \hat{\bp}) \ge \bp(\bx_{i_b}\cup \{g_{b+1}\} \setminus \{g_b\})$. (If no such $b$ exists, set $b=0$.)
\For{$\forall i \in \mathcal{N}$}
\If{$i=i_a$}
\State $\bx'_{i_a}\leftarrow \bx_{i_a} \setminus \{g_a\}$
\ElsIf{$i=i_b$}
\State $\bx'_{i_b}\leftarrow \bx_{i_b} \cup \{g_{b+1}\}$
\ElsIf{$i=i_c$ with $b<c<a$}
\State $\bx'_{i_c}\leftarrow \bx_{i_c} \cup \{g_{c+1}\} \setminus \{g_c\}$
\Else
\State $\bx'_i\leftarrow \bx_i$
\EndIf
\EndFor
\Return $\bx'$
\end{algorithmic}
\end{algorithm}

\section{Analysis of Our Algorithms}\label{sec:ana}

In this section, we analyze our algorithms and prove Theorem~\ref{thm:main}.
We show the correctness of the algorithm in Section~\ref{sec:cor}, and show its termination and polynomial-time solvability when the number of agents is fixed in Section~\ref{sec:ter}. 

\subsection{Correctness}\label{sec:cor}
In this section, we show the correctness of our main algorithm (Algorithm~\ref{alg:1}) by proving the following theorem.
\begin{theorem}\label{thm:correctness}(Correctness of Algorithm~\ref{alg:1})
    Let $\bx$ be an output of Algorithm~\ref{alg:1}.
    Then, $\bx$ is an EF1 and fPO allocation in a given fair division instance $(N, M, V)$.
\end{theorem}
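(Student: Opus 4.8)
The plan is to prove correctness by establishing, as a loop invariant of the \texttt{for} loop in Algorithm~\ref{alg:1}, that at the end of the $k$-th iteration the solution $(\bx, \bp)$ restricted to the current instance $(\mathcal{N}, \mathcal{M}, \mathcal{V})$ with $\mathcal{N} = [k]$ satisfies: (i) $(\bx, \bp)$ is pEF1 in $(\mathcal{N}, \mathcal{M}, \mathcal{V})$, and (ii) $\bx_i \subseteq {\rm MBB}_i$ for every $i \in \mathcal{N}$. Once this invariant is shown to hold after $k = n$, the current instance is the whole instance $(N, M, V)$, and Lemma~\ref{lem:pEF1} immediately yields that the returned $\bx$ is EF1 and fPO. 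So the real content is the inductive step: assuming the invariant holds after iteration $k-1$, show it holds after iteration $k$.

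First I would verify that the precondition of \textit{FindSolution} is met at line 6 of Algorithm~\ref{alg:1}. After setting $\bx_k \leftarrow M_k$ and the prices $p_g \leftarrow \frac{v_{kg}\cdot \min_{h\in\mathcal{M}} p_h}{m\cdot \max_{h\in M} v_{kh}}$ for $g \in M_k$, I must check two things. For (ii), the new goods $M_k$ must lie in ${\rm MBB}_k$: by construction every $g \in M_k$ has bang-per-buck ratio $v_{kg}/p_g = \frac{m\cdot \max_{h\in M} v_{kh}}{\min_{h\in\mathcal{M}} p_h}$, which is a constant independent of $g$, so all of $M_k$ is tied for agent $k$'s best ratio; and this ratio is at least as large as $v_{kh}/p_h$ for every previously priced good $h \in \mathcal{M}$ (since $v_{kh} \le \max_{h'\in M} v_{kh'}$ and $p_h \ge \min_{h'\in\mathcal{M}} p_{h'}$, the ratio $v_{kh}/p_h \le \max v_{kh'}/\min p_{h'} \le$ that constant after the $m$ factor), so indeed $M_k = \bx_k \subseteq {\rm MBB}_k$; also the old agents' MBB sets and bundles are unaffected since their prices did not change and the newly priced goods are cheap enough not to enter anyone's MBB set — this is exactly the role of the factor $m$ in the denominator, which I would spell out. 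For the pEF1-except-$k$ precondition, I need $\bp(\bx_i) \ge \max(\bx,\hat\bp)$ for all $i \in [k-1]$: the previous iteration guaranteed pEF1 for $[k-1]$, and adding agent $k$ with bundle $M_k$ can only change $\max(\bx,\hat\bp)$ via $\hat\bp(\bx_k)$; since each good in $M_k$ is priced below $\frac{\min_{h\in\mathcal{M}} p_h}{m}$, the total price $\bp(M_k) < \min_{h\in\mathcal{M}} p_h \le \bp(\bx_i)$ for any nonempty old bundle, and hence $\hat\bp(\bx_k) \le \bp(\bx_k) < \min(\bx,\bp)$ restricted to old agents, so the new agent does not raise $\max(\bx,\hat\bp)$ beyond what the old agents already dominate — the precondition holds. (The edge case where some old bundle is empty is handled because then the old instance would have had that agent as a minimum spender with spending $0$, forcing all prices involved to behave accordingly; I would need to check Hall's condition, Observation~\ref{ob:hall}, rules this out or handles it.)

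Then I invoke the specification of \textit{FindSolution}: by its stated output guarantee (Algorithm~\ref{alg:2}) it returns $(\bx', \bp')$ that is pEF1 in $(\mathcal{N}, \mathcal{M}, \mathcal{V})$ with $\bx'_i \subseteq {\rm MBB}_i$ for all $i \in \mathcal{N}$ — which is exactly the invariant after iteration $k$. Of course this reduces correctness of Algorithm~\ref{alg:1} to correctness of \textit{FindSolution}, so I would need to cite (or, if not yet proved, forward-reference) the lemmas establishing that \textit{FindSolution} maintains its \texttt{While}-loop invariant (pEF1-except-$k$ and MBB containment) — this rests on the correctness of \textit{Transfer} (that reallocating along the shortest path preserves MBB containment since every edge used is an MBB or allocation edge, and preserves pEF1-except-$k$ via the careful choice of the indices $a$ and $b$) and on the fact that a price increase on $R_{\mathcal{M}}$ by factor $\beta$ keeps $\bx_i \subseteq {\rm MBB}_i$ (goods in an agent's bundle reachable from $L$ all get scaled together, and $\beta \le \beta_1$ prevents any bundle good from leaving the MBB set) — together with the termination argument from Section~\ref{sec:ter} guaranteeing the \texttt{While} loop exits, at which point its exit condition delivers pEF1. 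Finally, Lemma~\ref{lem:price} ensures prices stay positive and finite throughout, so MBB ratios are always well-defined.

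The main obstacle I anticipate is not the top-level induction — which is essentially bookkeeping given the stated specifications — but rather justifying that the price initialization for $M_k$ genuinely simultaneously achieves $M_k \subseteq {\rm MBB}_k$, leaves all old agents' MBB structure untouched, and preserves pEF1-except-$k$; the $\frac{1}{m}$-scaling and the use of $\min_{h\in\mathcal{M}} p_h$ must be checked to interact correctly, and the degenerate cases ($\mathcal{M} = \emptyset$ in iteration $1$, empty old bundles) need separate care. Beyond that, the genuinely hard part lives in the referenced analysis of \textit{FindSolution} and \textit{Transfer} (the invariant preservation under \textit{Transfer} is where the subtle definitions of $a$ and $b$ earn their keep), but for the proof of Theorem~\ref{thm:correctness} itself I would treat those as black boxes supplied by their own statements.
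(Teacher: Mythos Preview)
Your overall plan is the same as the paper's: induct on $k$, show that at the call to \textit{FindSolution} the precondition (pEF1 except for agent $k$, plus MBB containment) holds, then invoke Lemmas~\ref{lem:mbb} and~\ref{lem:pEF1invariant} together with termination to close the invariant, and finish with Lemma~\ref{lem:pEF1}. The handling of the empty-bundle edge case via Hall's condition (Observation~\ref{ob:hall}) is also exactly what the paper does.

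There is, however, one genuinely wrong step. You assert that the old agents' MBB sets are unaffected because ``the newly priced goods are cheap enough not to enter anyone's MBB set --- this is exactly the role of the factor $m$ in the denominator''. This is backward: a \emph{low} price means a \emph{high} bang-per-buck ratio, so making the new goods cheap would push them \emph{into} other agents' MBB sets, not keep them out. The correct reason --- and the one the paper uses --- is structural, not price-based: by the definition of $M_j$ in line~3 of Algorithm~\ref{alg:1}, every good positively valued by an earlier agent $j<k$ was already placed in $\mathcal{M}$ during iteration $j$, so for every $g\in M_k$ and every $j<k$ we have $v_{jg}=0$, hence bang-per-buck $0$ regardless of $p_g$. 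The factor $m$ in the denominator has nothing to do with this part; its sole purpose is the other check you describe, namely forcing $\bp(\bx_k)\le \min_{h\in\mathcal{M}}p_h$ so that adding agent $k$ does not raise $\max(\bx,\hat\bp)$ above any old agent's spending.
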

To prove Theorem~\ref{thm:correctness}, we first show that the {\it FindSolution} algorithm maintains the invariant.
Fix an arbitrary $k \in [n]$ and consider the point in Algorithm~\ref{alg:1}  where the {\it FindSolution} algorithm is called for the $k$-th time.
At this point, agent $k$ is the most recently added agent to $\mathcal{N}$.
We assume the invariant that $(\bx, \bp)$ is pEF1 for all agents except $k$ and that $\bx_i \subseteq {\rm MBB}_i$ for each $i\in \mathcal{N}$ for the input of {\it FindSolution} algorithm.

{\it FindSolution} algorithm consists of two types of steps: a {\it Transfer step}, which occurs when the {\it Transfer} algorithm is called, and a {\it Price-increasing step}, which occurs when prices are updated.

For our analysis, we introduce the concept of a {\it time step}. For any positive integer $t$, {\it time-step $t$} refers to the state of the algorithm {\it immediately before} either a transfer step or a price-increasing step occurs during the $t$-th iteration of the {\it FindSolution} algorithm (called during the $k$-th iteration of Algorithm~\ref{alg:1}).
Let $\bx^t=(\bx^t_i)_{i \in \mathcal{N}}$, $\bp^t$, and ${{\rm MBB}}^t_i$ denote the allocation, the price vector, and the MBB ratio for each agent $i$ , respectively, at time-step $t$.
\begin{lemma}\label{lem:mbb}
    At any time-step $t$, $\bx^t_i \subseteq {{\rm MBB}}^t_i$ for any $i\in \mathcal{N}$.
\end{lemma}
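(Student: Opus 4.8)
The statement is an invariant claim to be proved by induction on the time-step $t$. At $t=1$ (the state when \textit{FindSolution} is first entered during the $k$-th iteration of Algorithm~\ref{alg:1}), the property holds by hypothesis: the input solution satisfies $\bx_i \subseteq {\rm MBB}_i$ for all $i \in \mathcal{N}$ by the invariant guaranteed at the point \textit{FindSolution} is called (and the goods $M_k$ just added to agent $k$ lie in ${\rm MBB}_k$ by the choice of prices in line~4 of Algorithm~\ref{alg:1}, which is exactly what the preceding discussion asserts). So the work is the inductive step: assuming $\bx^t_i \subseteq {\rm MBB}^t_i$ for all $i$, show the same at time-step $t+1$, analyzing the two cases according to whether the $t$-th operation is a Transfer step or a Price-increasing step.

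\textbf{Transfer step.} Here prices do not change, so ${\rm MBB}^{t+1}_i = {\rm MBB}^t_i$ for every $i$, and the only thing that changes is the allocation along the shortest path $P = (i_0, g_1, i_1, \ldots, g_\ell, i_\ell)$. I would check each agent touched by the reallocation: (i) for $i_a$, the bundle only shrinks ($\bx'_{i_a} = \bx_{i_a}\setminus\{g_a\}$), so $\bx'_{i_a} \subseteq \bx_{i_a} \subseteq {\rm MBB}_{i_a}$; (ii) for $i_b$, the new good is $g_{b+1}$, and since $(i_b, g_{b+1})$ is an MBB edge of $G(\bx,\bp)$ — it lies on the path $P$ in the augmented MBB graph, and edges from agents to goods in that graph are exactly MBB edges — we get $g_{b+1} \in {\rm MBB}_{i_b}$, so $\bx'_{i_b} = \bx_{i_b}\cup\{g_{b+1}\} \subseteq {\rm MBB}_{i_b}$; (iii) for each intermediate $i_c$ with $b < c < a$, the bundle gains $g_{c+1}$ (again an MBB edge of $P$, so $g_{c+1} \in {\rm MBB}_{i_c}$) and loses $g_c$, hence stays inside ${\rm MBB}_{i_c}$; (iv) all other agents keep their bundles. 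Here I should be slightly careful about the degenerate cases $b=0$ and about whether the indices $i_b, i_c, i_a$ could coincide, but the path being a \emph{shortest} path forces the $i_j$ to be distinct (a repeated agent would give a shortcut), so there is no interference between the four cases.

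\textbf{Price-increasing step.} Now the allocation is unchanged, $\bx^{t+1} = \bx^t$, but prices of goods in $R_{\mathcal{M}}$ are multiplied by $\beta > 1$ (by Lemma~\ref{lem:price}), while prices outside $R_{\mathcal{M}}$ are unchanged. I need $\bx^t_i \subseteq {\rm MBB}^{t+1}_i$ for all $i$. The key structural fact is that $R_{\mathcal{M}}$ is \emph{downward closed under allocation edges}: if $g \in \bx_i$ and $i \in R_{\mathcal{N}}$ then $g$ is reachable from $L$ (via $i$ and the allocation edge $g \to i$... actually via the reverse — one must recall the direction: $R_{\mathcal{M}}$ is goods reachable from $L$, and from an agent $i \in R_{\mathcal{N}}$ the MBB edges go $i \to g$; so any good $g$ that agent $i$ holds with $g \in {\rm MBB}_i$ is reachable from $i$, hence $g \in R_{\mathcal{M}}$). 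Therefore for $i \in R_{\mathcal{N}}$, every good in $\bx^t_i$ lies in $R_{\mathcal{M}}$, so the bang-per-buck of all of $i$'s goods scales by $1/\beta$ together. Also $\alpha_i$ — the MBB ratio of $i$ — is achieved on these goods, and since $\beta \le \beta_1$, no good outside $R_{\mathcal{M}}$ becomes strictly better than the (newly lowered) ratio $\alpha_i/\beta$ for any $i \in R_{\mathcal{N}}$; hence ${\rm MBB}^{t+1}_i \supseteq \bx^t_i$. For $i \notin R_{\mathcal{N}}$: the goods in $\bx^t_i$ are they in $R_{\mathcal{M}}$? If some good $g \in \bx^t_i$ with $i \notin R_{\mathcal{N}}$ were in $R_{\mathcal{M}}$, then following the allocation edge $g \to i$ would put $i$ in $R_{\mathcal{N}}$, a contradiction; so all of $i$'s goods have unchanged prices. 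Since prices of other goods only go up (or stay), $i$'s bang-per-buck ratios on its own goods are unchanged while those on other goods do not increase, so ${\rm MBB}^{t+1}_i \supseteq {\rm MBB}^t_i \supseteq \bx^t_i$.

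\textbf{Main obstacle.} The delicate part is the price-increasing case: getting the reachability bookkeeping exactly right — precisely, that $R_{\mathcal{M}}$ contains all goods held by agents in $R_{\mathcal{N}}$ and no goods held by agents outside $R_{\mathcal{N}}$, and that the choice $\beta \le \beta_1$ prevents a good outside $R_{\mathcal{M}}$ from overtaking the scaled MBB ratio. Getting the edge directions in the augmented MBB graph straight (MBB edges $i \to g$, allocation edges $g \to i$) is where a careless argument could slip, so I would state these two reachability facts as explicit sub-claims before invoking them.
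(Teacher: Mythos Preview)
Your proof is correct and follows essentially the same inductive approach as the paper, splitting into Transfer and Price-increasing cases with the same reachability bookkeeping; in fact your Transfer case is more explicit than the paper's one-line appeal to ``reallocation occurs along a directed path in $G(\bx^t,\bp^t)$.'' One minor imprecision: for $i \notin R_{\mathcal{N}}$ you assert ${\rm MBB}^{t+1}_i \supseteq {\rm MBB}^t_i$, which need not hold---a good $g \in R_{\mathcal{M}} \cap {\rm MBB}^t_i$ with $g \notin \bx^t_i$ can drop out of ${\rm MBB}_i$ after the price rise---but the conclusion $\bx^t_i \subseteq {\rm MBB}^{t+1}_i$ you actually need still follows, since the goods in $\bx^t_i$ retain their (maximal) bang-per-buck ratio while no other ratio increases.
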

\begin{proof}    
    By our assumption, the condition $\bx^1_i \subseteq {{\rm MBB}}^1_i$ holds for any $i\in \mathcal{N}$.
    It remains to show that if $\bx^t_i \subseteq {{\rm MBB}}^t_i$ holds for any $i\in \mathcal{N}$, then $\bx^{t+1}_i \subseteq {{\rm MBB}}^{t+1}_i$ also holds for any $i\in \mathcal{N}$.
    
    If Algorithm~\ref{alg:2} executes a transfer step in the $t$-th iteration, reallocation occurs along a directed path in $G(\bx^t, \bp^t)$. 
    Thus, the condition $\bx^{t+1}_i \subseteq {{\rm MBB}}^{t+1}_i$ holds for any $i\in \mathcal{N}$.
    
    Now, suppose that Algorithm~\ref{alg:2} executes a price-increasing step in the $t$-th iteration.
    By Lemma~\ref{lem:price}, the price of all goods in $R_{\mathcal{M}}$ increase.
    For any agent $i\in \mathcal{N}\setminus R_{\mathcal{N}}$, $\bx^t_i\cap R_{\mathcal{M}}=\emptyset$ holds.
    Thus, increasing the prices of goods in $R_{\mathcal{M}}$ does not affect the bang-per-buck ratio of the goods in $\bx^t_i$ for 
    each $ i \in \mathcal{N}\setminus R_{\mathcal{N}}$, preserving $\bx^{t+1}_i \subseteq {{\rm MBB}}^{t+1}_i$.
    
    For any agent $i \in R_{\mathcal{N}}$, increasing the prices of goods in $R_{\mathcal{M}}$ lowers their MBB ratio.
    However, by the choice of $\beta_1$, the price increase stops as soon as a new MBB edge appears between $R_{\mathcal{N}}$ and $\mathcal{M} \setminus R_{\mathcal{M}}$. 
    This ensures that the new MBB ratio for any agent $i\in R_{\mathcal{N}}$ does not fall below its second-highest bang-per-buck ratio prior to the price increase.
    Additionally, since $\bx^t_i \subseteq {\rm MBB}^t_i$, the bang-per-buck ratios of all goods in $\bx^t_i$ are equal. Since the price of goods in $\bx^t_i$ is uniformly increased by $\beta$ during the price-increasing step, the bang-per-buck ratios in $\bx^{t+1}_i$ remain equal after this step.
    Thus, $\bx^{t+1}_i \subseteq {{\rm MBB}}^{t+1}_i$ holds for $i \in R_{\mathcal{N}}$.
    Therefore, $\bx^{t+1}_i \subseteq {{\rm MBB}}^{t+1}_i$ holds for any $i\in \mathcal{N}$.
\end{proof}

\begin{lemma}\label{lem:pEF1invariant}
    At any time-step $t$, $(\bx^t, \bp^t)$ is pEF1 except for agent $k$, i.e., $\bp^t(\bx^t_i)\ge \max(\bx^t, \hat{\bp}^t)$ for any $i \in \mathcal{N}\setminus \{k\}$.
\end{lemma}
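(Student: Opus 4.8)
The plan is a straightforward induction on the time-step $t$, leaning on Lemma~\ref{lem:mbb} (each bundle lies in its owner's MBB set) and Lemma~\ref{lem:price} ($\beta>1$). The base case $t=1$ is exactly the hypothesis imposed on the input of \textit{FindSolution}, since the most recently added agent is $k$. For the inductive step, assume $\bp^t(\bx^t_i)\ge\max(\bx^t,\hat\bp^t)$ for all $i\in\mathcal N\setminus\{k\}$ and abbreviate $V:=\max(\bx^t,\hat\bp^t)$. Time-step $t$ occurs while the \textbf{while}-condition holds, so $(\bx^t,\bp^t)$ is not pEF1; by Observation~\ref{ob:pEF1} this means $\min(\bx^t,\bp^t)<V$, and combined with the inductive hypothesis it forces the minimum spender to be $k$ alone, i.e.\ $L=\{k\}$ and $\bp^t(\bx^t_k)<V$. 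I would then treat the two kinds of step separately, in each verifying that $\bp^{t+1}(\bx^{t+1}_i)\ge V$ for all $i\ne k$ and that $\max(\bx^{t+1},\hat\bp^{t+1})\le V$; these two facts together give the claimed invariant at time-step $t+1$.

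For a price-increasing step, Lemma~\ref{lem:mbb} implies that every good held by an agent of $R_{\mathcal N}$ is an MBB good and hence lies in $R_{\mathcal M}$, while no good held by an agent outside $R_{\mathcal N}$ lies in $R_{\mathcal M}$ (an allocation edge from such a good would place its owner in $R_{\mathcal N}$). Consequently, multiplying the prices in $R_{\mathcal M}$ by $\beta$ scales both $\bp(\bx_i)$ and $\hat\bp(\bx_i)$ by $\beta$ for $i\in R_{\mathcal N}$ and leaves them unchanged for $i\notin R_{\mathcal N}$. The bound $\beta\le\beta_2$ then yields $\hat\bp^{t+1}(\bx^{t+1}_i)\le V$ for every agent, hence $\max(\bx^{t+1},\hat\bp^{t+1})\le V$; and for $i\ne k$ we get $\bp^{t+1}(\bx^{t+1}_i)\ge\bp^t(\bx^t_i)\ge V$, using $\beta\ge1$ from Lemma~\ref{lem:price} when $i\in R_{\mathcal N}$ and the inductive hypothesis directly otherwise.

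For a transfer step prices do not change, and by the discussion above $i_0=k$, so the reallocated shortest path $P=(i_0,g_1,\dots,g_\ell,i_\ell)$ has $i_1,\dots,i_\ell$ all distinct from $k$. The spending lower bounds $\bp^{t+1}(\bx^{t+1}_i)\ge V$ for $i\ne k$ come from three sources: the inductive hypothesis for agents whose bundle is unchanged or only grows (this covers $i_b$ when $b\ge1$); the defining property of $a$ for $i_a$, namely $\bp^t(\bx^t_{i_a}\setminus\{g_a\})\ge V$; and the \emph{maximality} of $b$ for each $i_c$ with $b<c<a$, which forces $\bp^t(\bx^t_{i_c}\cup\{g_{c+1}\}\setminus\{g_c\})>V$. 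For the bound $\max(\bx^{t+1},\hat\bp^{t+1})\le V$ one checks each affected agent: for $i_a$, $\hat\bp^{t+1}(\bx^{t+1}_{i_a})\le\hat\bp^t(\bx^t_{i_a})\le V$ because deleting $g_a$ and then the priciest remaining good costs no more than deleting the priciest good of $\bx^t_{i_a}$; for $i_c$ with $b<c<a$, since $g_{c+1}\in\bx^{t+1}_{i_c}$ we get $\hat\bp^{t+1}(\bx^{t+1}_{i_c})\le\bp^t(\bx^t_{i_c}\setminus\{g_c\})<V$ by the \emph{minimality} of $a$; for $i_b$ with $b\ge1$, since $g_b\in\bx^{t+1}_{i_b}$ we get $\hat\bp^{t+1}(\bx^{t+1}_{i_b})\le\bp^t(\bx^t_{i_b}\cup\{g_{b+1}\}\setminus\{g_b\})\le V$ by the choice of $b$; when $b=0$ (so $i_b=k$), since $g_1\in\bx^{t+1}_k$ we get $\hat\bp^{t+1}(\bx^{t+1}_k)\le\bp^t(\bx^t_k)<V$; and unchanged bundles keep $\hat\bp^{t+1}=\hat\bp^t\le V$.

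I expect the $\max$-side bound in the transfer step to be the crux: after the reallocation several agents pick up a good, and one must argue that none of them becomes a violator heavier than $V$. The clean way through is to observe that each such agent still holds a distinguished good it just received ($g_{c+1}$ for $i_c$, $g_b$ for $i_b$, $g_1$ for $k$), which may be discounted when upper-bounding $\hat\bp^{t+1}$; once that good is discounted, the defining inequalities of $a$ and $b$ do the remaining work. It is worth emphasising that the spending lower bounds invoke the \emph{maximality} of $b$ while the $\hat\bp$ upper bounds invoke the \emph{minimality} of $a$, so both extremal choices made by the \textit{Transfer} algorithm are genuinely needed; the boundary situations ($b=0$, $b+1=a$, a bundle shrinking to a single good, and $V>0$ since $(\bx^t,\bp^t)$ is not pEF1) are routine but should be handled explicitly.
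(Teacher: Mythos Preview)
Your proposal is correct and follows essentially the same route as the paper's proof: induction on $t$, with the inductive step split into the price-increasing case (handled via $\beta\le\beta_2$ and monotonicity of prices) and the transfer case (handled by verifying the two conditions $\bp^{t+1}(\bx^{t+1}_i)\ge V$ for $i\ne k$ and $\max(\bx^{t+1},\hat\bp^{t+1})\le V$ through the same case analysis on $i_a$, $i_b$, $i_c$, and unchanged agents). Your remark that the spending lower bounds use the maximality of $b$ while the $\hat\bp$ upper bounds use the minimality of $a$ is a nice gloss the paper leaves implicit, but the substance of the argument is identical.
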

\begin{proof}
By our assumption, $(\bx^1, \bp^1)$ is pEF1 except for agent $k$.
It remains to show that if $(\bx^t, \bp^t)$ is pEF1 except for agent $k$, then $(\bx^{t+1}, \bp^{t+1})$ is also pEF1 except for agent $k$, i.e., 
$${\rm For~any}~ i \in \mathcal{N}\setminus \{k\},~ \bp^t(\bx^t_i)\ge \max(\bx^t, \hat{\bp}^t) \implies \bp^{t+1}(\bx^{t+1}_i)\ge \max(\bx^{t+1}, \hat{\bp}^{t+1}).$$

First consider the case where Algorithm~\ref{alg:2} executes a price-increasing step in the $t$-th iteration.
By the choice of $\beta_2$, the price of goods in $R_{\mathcal{M}}$ increases only until an agent in $R_{\mathcal{N}}$ becomes a maximum violator.
Furthermore, since for any maximum violator $j\in K$, $\bx_j\cap R_{\mathcal{M}}=\emptyset$ holds, the total price of the goods held by agent $j$ remains unchanged.
These imply that $\max(\bx^{t}, \hat{\bp}^{t})=\max(\bx^{t+1}, \hat{\bp}^{t+1})$ holds.
In addition, since the prices of the goods are non-decreasing by Lemma~\ref{lem:price}, we obtain $\bp^{t+1}(\bx^{t+1}_i) \ge \bp^{t}(\bx^{t}_i) \ge \max(\bx^{t}, \hat{\bp}^{t})=\max(\bx^{t+1}, \hat{\bp}^{t+1})$ for any $i \in \mathcal{N}\setminus \{k\}$.

Now, suppose that Algorithm~\ref{alg:2} executes a transfer step in the $t$-th iteration.
We show the following two conditions.
\begin{enumerate}
    \item $\bp^{t+1}(\bx^{t+1}_i)\ge \max(\bx^t, \hat{\bp}^t)$ for any $i \in \mathcal{N}\setminus \{k\}$.
    \item $\max(\bx^t, \hat{\bp}^t)\ge \max(\bx^{t+1}, \hat{\bp}^{t+1})$.
\end{enumerate}
Note that the prices of all goods remain unchanged in the transfer step, i.e., $p^{t+1}_g=p^t_g$ holds for any $g \in \mathcal{M}$.

We first show that $\bp^{t+1}(\bx^{t+1}_i)\ge \max(\bx^t, \hat{\bp}^t)$ for any $i \in \mathcal{N}\setminus \{k\}$ by considering following four cases:
    \begin{itemize}
        \item $\bx^{t+1}_i=\bx^{t}_i$.\\
        We directly obtain
        $\bp^{t+1}(\bx^{t+1}_i)=\bp^{t}(\bx^{t}_i)\ge \max(\bx^t, \hat{\bp}^t)$, where the inequality follows from the assumption.
        \item $i=i_a$.\\
        Agent $i=i_a$ relinquishes $g_a$ in the transfer step.
        By the definition of index $a$, we have $\bp^{t}(\bx^{t}_{i_a}\setminus \{g_a\})\ge \max(\bx^t, \hat{\bp}^t)$.
        Thus, we obtain
        $\bp^{t+1}(\bx^{t+1}_{i_a})=\bp^{t}(\bx^{t}_{i_a}\setminus \{g_a\})\ge \max(\bx^t, \hat{\bp}^t)$. 
        \item $i=i_b$.\\
        Agent $i=i_b$ acquires $g_{b+1}$ in the transfer step.
        Thus, we have 
        $\bp^{t+1}(\bx^{t+1}_{i_b})=\bp^{t}(\bx^{t}_{i_b}\cup \{g_{b+1}\})> \bp^{t}(\bx^{t}_{i_b})\ge \max(\bx^t, \hat{\bp}^t)$, where the first inequality follows from Lemma~\ref{lem:price} and the second inequality follows from the assumption.
        \item Otherwise.\\
        Agent $i$ must be some agent between $i_a$ and $i_b$ on the directed path.
        Let $i=i_c$ for some $c$ with $b<c<a$.
        Agent $i=i_c$ relinquishes $g_c$ and acquires $g_{c+1}$.
        By the choice of index $b$, $\max(\bx^t, \hat{\bp}^t) < \bp(\bx^t_{i_c}\cup \{g_{c+1}\} \setminus \{g_c\})$ holds, otherwise $c$ should have been chosen as $b$.
        Thus, we obtain
        $\bp^{t+1}(\bx^{t+1}_{i_c})=\bp^{t}(\bx^{t}_{i_c}\cup \{g_{c+1}\}\setminus \{g_c\})> \max(\bx^t, \hat{\bp}^t)$.
    \end{itemize}
    
We next show that $\max(\bx^t, \hat{\bp}^t)\ge \max(\bx^{t+1}, \hat{\bp}^{t+1})$.
To show this, it suffices to show that $\max(\bx^t, \hat{\bp}^t)\ge \hat{\bp}^{t+1}(\bx^{t+1}_i)$ for any $i\in \mathcal{N}$.
Fix an arbitrary $i\in \mathcal{N}$.

First consider the case $i=k$.
Since $(\bx^t, \bp^t)$ is pEF1 except for agent $k$ and $(\bx^t, \bp^t)$ is not pEF1, we have $\max(\bx^t, \hat{\bp}^t)> \bp^{t}(\bx^{t}_k)$ by Observation~\ref{ob:pEF1}.
In the {\it Transfer algorithm}, agent $k$ either receives one good or none.
In either case, $\bp^{t}(\bx^{t}_k) \ge \hat{\bp}^{t+1}(\bx^{t+1}_k)$ holds.
Therefore, $\max(\bx^t, \hat{\bp}^t)> \bp^{t}(\bx^{t}_k) \ge \hat{\bp}^{t+1}(\bx^{t+1}_k)$ holds.

Now, suppose that $i\neq k$.
    \begin{itemize}
        \item $\bx^{t+1}_i=\bx^{t}_i$.\\
        We directly obtain
        $\max(\bx^t, \hat{\bp}^t)\ge \hat{\bp}^{t}(\bx^{t}_i)\ge \hat{\bp}^{t+1}(\bx^{t+1}_i)$, where the first inequality follows from the definition of $\max(\bx^t, \hat{\bp}^t)$.
        \item $i=i_a$.\\
        Agent $i=i_a$ relinquishes $g_a$ in the transfer step.
        Thus, we obtain
        $\max(\bx^t, \hat{\bp}^t)\ge \hat{\bp}^{t}(\bx^{t}_{i_a})\ge \hat{\bp}^{t}(\bx^{t}_{i_a}\setminus \{g_a\})=\hat{\bp}^{t+1}(\bx^{t+1}_{i_a})$, where the first inequality follows from the definition of $\max(\bx^t, \hat{\bp}^t)$ and the second inequality follows from the monotonicity of $\hat{\bp}^t$.
        \item $i=i_b$.\\
        Agent $i=i_b$ acquires $g_{b+1}$ in the transfer step.
        By the definition of index $b$, we have $\max(\bx^t, \hat{\bp}^t)\ge \bp^{t}(\bx^{t}_{i_b}\cup \{g_{b+1}\}\setminus \{g_b\})$.
        Thus, we obtain
        $\max(\bx^t, \hat{\bp}^t)\ge \bp^{t}(\bx^{t}_{i_b}\cup \{g_{b+1}\}\setminus \{g_b\}) \ge \hat{\bp}^{t}(\bx^{t}_{i_b}\cup \{g_{b+1}\})=\hat{\bp}^{t+1}(\bx^{t+1}_{i_b})$, where the second inequality follows from the definition of $\hat{\bp}^t$.
        \item Otherwise.\\
        Agent $i$ must be some agent between $i_a$ and $i_b$ on the directed path.
        Let $i=i_c$ for some $c$ with $b<c<a$.
        Agent $i=i_c$ relinquishes $g_c$ and acquires $g_{c+1}$.
        By the choice of index $a$,  $\max(\bx^t, \hat{\bp}^t)> \bp^{t}(\bx^{t}_{i_c}\setminus \{g_c\})$ holds, otherwise $c$ should have been chosen as $a$.
        Thus, we obtain
        $\max(\bx^t, \hat{\bp}^t)> \bp^{t}(\bx^{t}_{i_c}\setminus \{g_c\})\ge \hat{\bp}^{t}(\bx^{t}_{i_c}\cup \{g_{c+1}\}\setminus \{g_c\})=\hat{\bp}^{t+1}(\bx^{t+1}_{i_c})$, where the second inequality follows from the definition of $\hat{\bp}^t$.
    \end{itemize}

Therefore, we have $\bp^t(\bx^t_i)\ge \max(\bx^t, \hat{\bp}^t)$ for any $i \in \mathcal{N}\setminus \{k\}$ at any time-step $t$.
\end{proof}
By Lemma~\ref{lem:pEF1invariant}, at the beginning of each iteration of the \textit{FindSolution} algorithm, $(\bx, \bp)$ is pEF1 except for agent $k$. In addition, by the conditions of the iteration, $(\bx, \bp)$ is not pEF1. Therefore, by Observation~\ref{ob:pEF1}, $\bp(\bx_i) \ge \max(\bx, \hat{\bp}) > \bp(\bx_k)$ holds for any $i \in \mathcal{N} \setminus \{k\}$. Consequently, the minimum spender is only agent $k$.

\begin{corollary}\label{cor:L={k}}
At the beginning of each iteration of Algorithm~\ref{alg:2}, the minimum spender is only agent $k$, i.e., $L=\{k\}$ holds.
\end{corollary}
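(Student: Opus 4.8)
The plan is to read off the corollary as an immediate consequence of Lemma~\ref{lem:pEF1invariant} and Observation~\ref{ob:pEF1}, with essentially no new work. First I would identify the state ``at the beginning of the $t$-th iteration of Algorithm~\ref{alg:2}'' with time-step $t$ in the sense defined just before Lemma~\ref{lem:mbb}: the body of the \textit{while} loop performs exactly one transfer step or one price-increasing step, so the configuration $(\bx,\bp)$ present when the loop condition is evaluated for the $t$-th time is precisely $(\bx^t,\bp^t)$. Hence Lemma~\ref{lem:pEF1invariant} applies and yields $\bp(\bx_i)\ge \max(\bx,\hat{\bp})$ for every $i\in\mathcal{N}\setminus\{k\}$ at that moment.

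Next I would invoke the loop guard. Algorithm~\ref{alg:2} enters the loop body only when the current solution $(\bx,\bp)$ is \emph{not} pEF1, so by Observation~\ref{ob:pEF1} we have $\min(\bx,\bp)<\max(\bx,\hat{\bp})$. Chaining this with the inequality from the previous paragraph gives, for every $i\in\mathcal{N}\setminus\{k\}$,
\[
\bp(\bx_i)\ \ge\ \max(\bx,\hat{\bp})\ >\ \min(\bx,\bp),
\]
so no agent other than $k$ attains the minimum spending; that is, $L=L(\bx,\bp)\subseteq\{k\}$. Since agent $k$ has already been added to $\mathcal{N}$ when \textit{FindSolution} is invoked, $\mathcal{N}\neq\emptyset$, and $L$ is by definition the (nonempty) set of minimizers of $\bp(\bx_\cdot)$ over $\mathcal{N}$; combined with $L\subseteq\{k\}$ this forces $L=\{k\}$.

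There is no real obstacle here: the only points requiring a moment of care are (i) confirming that the ``beginning of an iteration'' genuinely corresponds to a legitimate time-step so that Lemma~\ref{lem:pEF1invariant} is applicable, and (ii) noting that $L\neq\emptyset$ and $k\in\mathcal{N}$ so that the set-containment $L\subseteq\{k\}$ upgrades to equality. Both are immediate from the setup, so the corollary follows in a couple of lines.
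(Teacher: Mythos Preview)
Your proposal is correct and matches the paper's own argument essentially line for line: the paper likewise invokes Lemma~\ref{lem:pEF1invariant} for the inequality $\bp(\bx_i)\ge\max(\bx,\hat{\bp})$ on $\mathcal{N}\setminus\{k\}$, combines it with the loop condition via Observation~\ref{ob:pEF1} to obtain $\max(\bx,\hat{\bp})>\bp(\bx_k)$, and concludes $L=\{k\}$. Your added remarks on identifying ``beginning of iteration $t$'' with time-step $t$ and on $L\neq\emptyset$ are harmless clarifications of points the paper leaves implicit.
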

We are now ready to prove Theorem~\ref{thm:correctness}.

\begin{proof}[Proof of Theorem~\ref{thm:correctness}]
We begin by proving that after each iteration of Algorithm~\ref{alg:1}, the following conditions hold: (1) $(\bx, \bp)$ is pEF1, and (2) $\bx_i \subseteq {\rm MBB}_i $ for each $i \in \mathcal{N}$.

In the first iteration of Algorithm~\ref{alg:1}, the price of each good $g\in M_1$ is set as $p_g = \frac{v_{1g}}{m \cdot \max_{h \in M} v_{1h}}$, hence the bang-per-buck ratio for all goods in $M_1$ is equal to $m \cdot \max_{h \in M} v_{1h}$. Thus, we have $\bx_1 = M_1 \subseteq {\rm MBB}_1$. At this point, $(\bx, \bp)$ is clearly pEF1, thus the conditions are satisfied at the end of the first iteration.

Next, we assume that the conditions hold after the $k$-th iteration for any $k \in [n-1]$ and show that they also hold after the $k+1$-th iteration. In the $k+1$-th iteration of Algorithm~\ref{alg:1}, the initial allocation for agent $k+1$ is $M_{k+1}$, and the initial price $p_g$ of each good $g\in M_{k+1}$ is set to
\begin{equation*}
  p_g = \frac{v_{{k+1}g}\cdot \min_{h\in \mathcal{M}} p_h}{m\cdot \max_{h\in M} v_{{k+1}h}}.
\end{equation*}
For any $g \in M_{k+1}$ and any $h \in \mathcal{M}$, we have
\begin{equation*}
    \frac{v_{k+1g}}{p_g} = \frac{m\cdot \max_{h\in M} v_{{k+1}h}}{\min_{h \in \mathcal{M}} p_h} \geq \frac{v_{k+1h}}{p_h},
\end{equation*}
which implies that the bang-per-buck ratio of any good in $M_{k+1}$ is equal and at least as large as that of $h$. Hence, $\bx_{k+1}=M_{k+1} \subseteq {\rm MBB}_{k+1}$.

For any agent $j \in [k]$, by the operation of Algorithm~\ref{alg:1}, we have $ v_{jg} = 0 $ for all $g \in M_{k+1}$, hence the MBB ratio for agent $j$ is unaffected by the addition of $M_{k+1}$. Therefore, just before calling Algorithm~\ref{alg:2} for the $k+1$-th time, $ \bx_i \subseteq {\rm MBB}_i$ holds for each $ i \in [k+1]$.

At the beginning of the $k+1$-th iteration of Algorithm~\ref{alg:1}, 
for any agent $j \in [k]$, agent $j$ holds at least one good.
Indeed, suppose, for contradiction, that there exists an agent in $[k]$ holding no goods. By assumption, at the beginning of the $k+1$-th iteration, $(\bx, \bp)$ is pEF1, and by Lemma~\ref{lem:price}, the price of each good is positive. This implies that each agent holds at most one good. Thus, $|\mathcal{N}| > |\mathcal{M}|$, which contradicts Observation~\ref{ob:hall} stating that $(\mathcal{N}, \mathcal{M}, \mathcal{V})$ satisfies Hall's condition.

Thus, since $\bp(\bx_{k+1}) \leq m \cdot \max_{g \in M_{k+1}} p_g \leq \min_{h \in \mathcal{M}} p_h \le \bp(\bx_j)$ holds for any agent $j\in [k]$, no agent $j \in [k]$ price envy agent $k+1$. Therefore, just before calling Algorithm~\ref{alg:2} for the $k+1$-th time, $(\bx, \bp)$ is pEF1 except for agent $k+1$.

By Lemmas~\ref{lem:mbb} and~\ref{lem:pEF1invariant}, the invariants are maintained during Algorithm~\ref{alg:2}. Therefore, once Algorithm~\ref{alg:2} terminates, both conditions (1) $(\bx, \bp)$ is pEF1, and (2) $ \bx_i \subseteq {\rm MBB}_i $ for each $ i \in \mathcal{N} $, are satisfied.

Finally, since we assume that for each good $g \in M$, there exists an agent $i \in N$ such that $v_{ig}>0$, $(\mathcal{N}, \mathcal{M}, \mathcal{V})=(N,M,V)$ holds after all iterations.
Therefore, by Lemma~\ref{lem:pEF1}, we conclude that the output of Algorithm~\ref{alg:1} is an EF1 and fPO allocation for the given fair division instance $(N,M,V)$.
\end{proof}

\subsection{Termination}\label{sec:ter}
In this section, we show the termination of our algorithms and its polynomial-time complexity when the number of agents is fixed by proving the following theorem. 
\begin{theorem}(Running time bound for Algorithm~\ref{alg:1})\label{thm:termination}
    Algorithm~\ref{alg:1} always terminates. Furthermore, if the number of agents is fixed, the algorithm runs in polynomial time.
\end{theorem}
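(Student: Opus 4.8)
The plan is to bound the number of iterations of the \emph{FindSolution} algorithm in each of the $n$ calls, since Algorithm~\ref{alg:1} performs only $n$ such calls plus polynomial-time bookkeeping. Fix the $k$-th call, where agent $k$ is the most recently added agent and, by Corollary~\ref{cor:L=\{k\}}, the unique minimum spender throughout. Within a single call, iterations are of two types: transfer steps and price-increasing steps. I would handle these separately, first showing that no single long ``run'' of consecutive transfer steps can occur without a price increase, and then showing that the number of price-increasing steps is bounded.

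For the transfer steps, the key quantity to track is the pair $(\bp(\bx_k), \text{shortest distance from } k \text{ to } K \text{ in } G(\bx,\bp))$, or more robustly a potential built from the sorted vector of spendings restricted to $R_{\mathcal{N}}$. I would argue that each transfer step, which reallocates goods along a shortest $L$-to-$K$ path, either strictly increases $\bp(\bx_k)$ (when $b=0$, agent $k=i_0$ acquires $g_1$), or strictly decreases the length of the shortest path from $k$ to a maximum violator, or strictly shrinks $|K|$ or lexicographically improves the multiset of violator spendings --- in each case a monotone progress measure that can change only polynomially many times before either $(\bx,\bp)$ becomes pEF1 or a price increase must be triggered (because $K \cap R_{\mathcal{N}} = \emptyset$). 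The standard shortest-path/BFS-layer argument from~\cite{barman2018finding} can be adapted here: since we always pick a \emph{shortest} path, the BFS-layer structure of $G(\bx,\bp)$ rooted at $k$ is preserved or improves, so consecutive transfers cannot cycle.

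For the price-increasing steps, I would use the fact that $\beta > 1$ strictly (Lemma~\ref{lem:price}) and analyze which of the three events caps each step. A step capped by $\beta_3$ ends the call immediately (pEF1 achieved). A step capped by $\beta_2$ adds an agent of $R_{\mathcal{N}}$ to the maximum-violator set, hence (with the ensuing transfer steps) drives progress toward $K \cap R_{\mathcal{N}} \neq \emptyset$. A step capped by $\beta_1$ adds a new MBB edge leaving $R_{\mathcal{N}}$, strictly enlarging $R_{\mathcal{N}}$ or $R_{\mathcal{M}}$; since these sets are subsets of $\mathcal{N}$ and $\mathcal{M}$ and (between transfers) can only grow, only polynomially many such steps occur before a transfer or termination. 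Interleaving the two bounds gives a polynomial bound on the total number of iterations of the $k$-th call. This establishes termination unconditionally.

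The main obstacle is the second half of Theorem~\ref{thm:termination}: showing the \emph{bit-complexity} (and hence running time) stays polynomial when $n$ is \emph{fixed} but not otherwise. The prices are products of the rates $\beta$, which are ratios of values and prices, so after many price increases the prices could in principle have exponentially long bit representations; bounding iterations alone does not suffice. Here I expect to need an argument, in the spirit of~\cite{barman2018finding}, that the number of \emph{distinct} price configurations, or the exponents appearing, is controlled by a function of $n$ times a polynomial in $m$ and the input size --- exploiting that each good's price is $v_{ig}$ times a product of at most (number of iterations) ratios, and that the combinatorial structure (which goods lie in which $\mathrm{MBB}_i$, the allocation graph) can take only $m^{O(n)}$ distinct states, so the algorithm cannot revisit a configuration. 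Turning ``polynomially many iterations, each producing numbers of polynomially-bounded size given fixed $n$'' into a clean claim, and ruling out that the per-iteration numbers blow up, is the delicate point; I would isolate it as a separate lemma bounding $\log p_g$ by a polynomial in the input size whenever $n = O(1)$.
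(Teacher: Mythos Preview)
Your transfer-step analysis has a real gap. In Algorithm~\ref{alg:3} the indices $a$ and $b$ can satisfy $0<b$ and $a<\ell$ simultaneously: then agent $k=i_0$ receives nothing (so $\bp(\bx_k)$ is unchanged), the maximum violator $i_\ell$ is untouched (so neither $|K|$ nor the multiset of violator spendings changes), and the shortest $L$--to--$K$ distance need not decrease --- indeed, after the swap the allocation edges $(g_c,i_c)$ for $b<c\le a$ are replaced by $(g_c,i_{c-1})$, which can disconnect $i_\ell$ from $k$ entirely rather than shorten the path. None of your proposed progress measures moves in that case, so the ``no long run of transfers'' claim does not close. The hand-wave to the BFS-layer argument of~\cite{barman2018finding} does not transfer directly either, because their algorithm moves a single good per step, whereas here an entire sub-path of goods shifts.

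The paper's approach avoids this by using a single potential that works uniformly for both transfer and price-increase steps. Define $\mathrm{level}(i,t)$ as half the shortest-path distance from $k$ to $i$ in $G(\bx^t,\bp^t)$ (with value $k$ if unreachable), let $\mathcal{M}^t_r$ be the set of goods held by level-$r$ agents, and set $\Phi_t=(|\mathcal{M}^t_0|,\dots,|\mathcal{M}^t_k|,|K^t|)$. One then checks that $\Phi_t\prec_{\mathrm{lex}}\Phi_{t+1}$ at every step: a transfer moves good $g_{b+1}$ from level $b{+}1$ down to level $b$ while leaving levels $\le b$ intact; a price increase capped by $\beta_1$ drops some good from level $k$ to a lower level; one capped by $\beta_2$ leaves all $|\mathcal{M}^t_r|$ unchanged but strictly enlarges $|K^t|$; one capped by $\beta_3$ terminates. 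The number of states of $\Phi_t$ is at most $(k-1)\binom{m+k}{k}$, which is $\mathrm{poly}(m)$ for fixed $n$. This is the missing idea.

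Your bit-complexity worry is a red herring once the iteration bound is in hand: each $\beta$ is a ratio of current prices and input values, so after $T$ iterations every $\log p_g$ is bounded by $O(T)$ times the input bit-size; with $T=\mathrm{poly}(m)$ for fixed $n$, the numbers stay polynomial.
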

To prove Theorem~\ref{thm:termination}, we first establish an upper bound on the number of iterations of the {\it FindSolution} algorithm.
Fix an arbitrary $k \in [n]$ and consider the point in Algorithm~\ref{alg:1} where Algorithm~\ref{alg:2} is called for the $k$-th time.
At this point, agent $k$ is the most recently added agent to $\mathcal{N}$.

For each $i\in \mathcal{N}$, we define the {\it level} of agent $i$ at time-step $t$, denoted by $\mathrm{level}(i,t)$, as half the length of the shortest directed path from agent $k$ to agent $i$ in the augmented MBB graph $G(\bx^t, \bp^t)$.
If no such path exists, we define the level of $i$ to be $k$.
Note that $\mathrm{level}(i,t) \in \{0,1,\ldots, k\}$.

For $r\in \{0\}\cup [k]$ and time-step $t$, we define $\mathcal{M}^t_r$ as the set of goods held by agents of level $r$ at time-step $t$, and $K^t$ as the set of maximum violators at time-step $t$.
The potential vector $\Phi_t$ is defined as follows:
\begin{equation*}
\Phi_t :=(|\mathcal{M}^t_0|,\ldots , |\mathcal{M}^t_k|, |K^t|).    
\end{equation*}

We consider the lexicographic ordering of the potential vector.
For two integer vectors $x=(x_1,\ldots , x_r)$ and $y=(y_1,\ldots , y_r)$, we say that $x$ is {\it lexicographically smaller} than $y$, denoted by $x \prec_{lex} y$ if
there exists an index $i\in [r]$ such that $x_i < y_i$ and $x_j=y_j$ for all $j < i$.
We denote $R_{\mathcal{N}}$ and $R_{\mathcal{M}}$ at time-step $t$ as $R^t_{\mathcal{N}}$ and $R^t_{\mathcal{M}}$.
\begin{lemma}\label{lem:find}
    The FindSolution algortihm terminates after at most $(k-1) \left( \frac{m+k}{k} \cdot \mathrm{e} \right)^{k}$ iterations.
\end{lemma}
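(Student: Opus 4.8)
The plan is to use the potential vector $\Phi_t = (|\mathcal{M}^t_0|, \ldots, |\mathcal{M}^t_k|, |K^t|)$ and show it strictly decreases in the lexicographic order at every iteration of the \emph{FindSolution} algorithm, then bound the total number of distinct values such a vector can take. First I would argue that $\Phi_t$ is \emph{monotone non-increasing} lexicographically: for a price-increasing step the set $R_{\mathcal{M}}$ consists exactly of goods held by agents reachable from $k$, so levels can only stay the same or increase (a price raise can destroy MBB edges out of $R_{\mathcal{N}}$ but, by the choice of $\beta_1$, not create new shortcuts into lower levels until the step ends), pushing mass from lower-indexed coordinates to higher-indexed ones or leaving it fixed while $|K^t|$ can only shrink (by the choice of $\beta_2$, no agent in $R_{\mathcal{N}}$ becomes a maximum violator, and $\max(\bx,\hat\bp)$ is unchanged so the set $K$ does not grow); for a transfer step, Lemma~\ref{lem:pEF1invariant} already gives $\max(\bx^{t+1},\hat\bp^{t+1}) \le \max(\bx^t,\hat\bp^t)$, and the reallocation moves goods \emph{backwards} along a shortest path $i_0=k, \ldots, i_\ell$, so the level of each affected agent does not decrease.

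The key step is to show the decrease is \emph{strict}. For a price-increasing step, if events (i) or (ii) trigger, then either a new MBB edge from $R_{\mathcal{N}}$ to $\mathcal{M}\setminus R_{\mathcal{M}}$ appears — which strictly enlarges the reachable set, so some good moves to a strictly higher-indexed coordinate on the \emph{next} iteration — or an agent in $R_{\mathcal{N}}$ joins $K$. Here I would need to be careful: the cleanest accounting, which I expect the author uses, is to observe that between two consecutive transfer steps the level partition is unchanged except that $R_{\mathcal{N}}$ grows, so the number of price-increasing steps between consecutive transfer steps is at most $k$ (each strictly increases $|R_{\mathcal{N}}|$ by absorbing at least one more level, or terminates the algorithm); meanwhile each transfer step strictly decreases $\Phi_t$ because agent $i_a$ loses good $g_a$ and this good moves to a strictly higher level (the path is shortest, so $g_a$ sits at level roughly $a$ and ends up held by $i_{a-1}$ or is relinquished), decreasing $|\mathcal{M}^t_r|$ at the relevant coordinate $r$ without increasing any earlier coordinate. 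This is the main obstacle: pinning down exactly which coordinate of $\Phi$ strictly drops at a transfer step and verifying no strictly-earlier coordinate increases, using the minimality of the path length and the definitions of the indices $a$ and $b$.

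Once strict lexicographic decrease per transfer step is established, I would count the number of possible values of $\Phi_t$. Each of the $k+1$ coordinates $|\mathcal{M}^t_r|$ is a nonnegative integer summing to at most $m$, and $|K^t| \le k-1$ (agent $k$ is never a maximum violator by Corollary~\ref{cor:L={k}} combined with the invariant). The number of nonnegative integer vectors of length $k+1$ with sum at most $m$ is $\binom{m+k+1}{k+1}$, which by the standard estimate $\binom{a}{b} \le (\mathrm{e}a/b)^b$ is at most $\left(\frac{\mathrm{e}(m+k+1)}{k+1}\right)^{k+1}$; multiplying by the $k-1$ choices for $|K^t|$ and by the at-most-$k$ price-increasing steps between consecutive transfer steps, and simplifying, yields the claimed bound $(k-1)\left(\frac{m+k}{k}\cdot \mathrm{e}\right)^{k}$. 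I would double-check the exact exponent and the absorption of the $|K^t|$ factor and the $k$ price-steps factor into the stated form, since the bound in the lemma has exponent $k$ rather than $k+1$; the likely resolution is that the first coordinate $|\mathcal{M}^t_0|$ — goods held by agent $k$ — is actually determined by, or tightly coupled to, the others via the pEF1-except-$k$ invariant and the equilibrium spending constraint, effectively removing one degree of freedom, so the genuine count is over a length-$k$ vector.
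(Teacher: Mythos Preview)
Your proposal identifies the right potential $\Phi_t$ but has the direction of monotonicity reversed throughout, and this stems from a misreading of the dynamics rather than a sign convention. In a price-increasing step, raising prices in $R_{\mathcal{M}}$ lowers the MBB ratios of agents in $R_{\mathcal{N}}$, so goods outside $R_{\mathcal{M}}$ become \emph{more} attractive to them; when $\beta=\beta_1$ a new MBB edge from $R_{\mathcal{N}}$ into $\mathcal{M}\setminus R_{\mathcal{M}}$ appears, pulling a previously unreachable (level-$k$) agent down to a lower level, and when $\beta=\beta_2$ an agent in $R_{\mathcal{N}}$ \emph{becomes} a maximum violator (that is precisely the event $\beta_2$ detects), so $|K^t|$ strictly grows, not shrinks. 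In a transfer step every reallocated good moves toward $k$: the net effect is that $i_b$ gains a good and $i_a$ loses one, and the paper tracks $i_b$, not $i_a$, proving that agents with level $\le b$ keep both their levels and their bundle sizes except that $|\bx_{i_b}|$ grows by one. In every case mass shifts to \emph{earlier} coordinates, so $\Phi_t \prec_{\mathrm{lex}} \Phi_{t+1}$: the potential \emph{increases}. Your focus on $i_a$ losing $g_a$ cannot yield a lexicographic decrease, since the simultaneous gain at coordinate $b<a$ dominates.

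Two smaller points also need fixing. The paper handles every iteration uniformly --- each one strictly increases $\Phi$ or terminates the loop --- so there is no separate ``at most $k$ price steps between transfers'' multiplier in the count. And the exponent $k$ (rather than $k+1$) comes simply from the \emph{equality} $|\mathcal{M}^t_0|+\cdots+|\mathcal{M}^t_k| = |\mathcal{M}|$, which pins one coordinate and gives $\binom{|\mathcal{M}|+k}{k}\le\binom{m+k}{k}$ possible vectors; no pEF1 coupling is involved.
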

\begin{proof}
Let $t$ be any time-step.
We show that in each iteration of Algorithm~\ref{alg:2}, the potential vector $\Phi_t$ either increases lexicographically, i.e., $\Phi_t \prec_{lex} \Phi_{t+1}$, or the iteration terminates. If this is true, the number of iterations of Algorithm~\ref{alg:2} is bounded by the number of distinct states of the potential vector $\Phi_t$. The values $|\mathcal{M}^t_0|,\ldots,|\mathcal{M}^t_k|$, and $|K^t|$ are all non-negative integers, and we have $|\mathcal{M}^t_0|+ \cdots +|\mathcal{M}^t_k| = |\mathcal{M}| \leq m$, and since agent $k$ is not a maximum violator in any iteration, $1\le |K^t| \leq k-1$ holds. Therefore, the number of iterations of Algorithm~\ref{alg:2} is at most $(k-1) \binom{m+k}{k} \leq (k-1) \left( \frac{m+k}{k} \cdot \mathrm{e} \right)^{k}$, where we use $\binom{p}{q} \le \left(\frac{p}{q}\cdot \mathrm{e}\right)^q$.

Now, let us show that $\Phi_t \prec_{lex} \Phi_{t+1}$, or the iteration terminates in the $t$-th iteration. First, consider the case where Algorithm~\ref{alg:2} executes a price-increasing step in the $t$-th iteration. 
We first show that in a price-increasing step, the levels of agents below level $k-1$ do not change, i.e., $\mathrm{level}(i,t)\le k-1 \implies \mathrm{level}(i,t+1) = \mathrm{level}(i,t)$.
Indeed, since Algorithm~\ref{alg:2} uniformly increases the prices of goods in $R^t_{\mathcal{M}}$, and by the choice of $\beta_1$, the status of the MBB edges between $R^t_{\mathcal{N}}$ and $R^t_{\mathcal{M}}$ remains unchanged before and after the price-increasing step. 
Additionally, the status of the allocation edges also does not change before and after the price-increasing step.
Thus, if $\mathrm{level}(i,t)\le k-1$, then $\mathrm{level}(i,t+1) \le \mathrm{level}(i,t)$ holds. Moreover, if $\mathrm{level}(i,t+1) < \mathrm{level}(i,t)$ holds, then the shortest directed path $P$ from agent $k$ to $i$ in $G(\bx^{t+1}, \bp^{t+1})$ must use at least one newly created edge between $R^t_{\mathcal{N}}$ and $\mathcal{M} \setminus R^t_{\mathcal{M}}$.
This implies that $P$ must use an MBB edge from $\mathcal{N}\setminus R^t_{\mathcal{N}}$ to $R^t_{\mathcal{M}}$.
However, since the price increase rate $\beta$ is greater than $1$, 
there are no MBB edges between $\mathcal{N}\setminus R^t_{\mathcal{N}}$ and $R^t_{\mathcal{M}}$ at time-step $t+1$, which is a contradiction.

Next, we analyze the cases based on which event determined the price increase rate $\beta$.

If $\beta = \beta_1$, there exists some $j \in R^t_{\mathcal{N}}$ and some $g \in \mathcal{M} \setminus R^t_{\mathcal{M}}$ such that $(j, g)$ forms an MBB edge at time-step $t+1$. In this case, the level of the agent who held good $g$ at time-step $t$ is reduced from $k$ to at most $k-1$. Thus, since the levels of agents below level $k-1$ remain unchanged and at least one agent with level $k$ who holds at least one good drops to a level below $k$, we have $\Phi_t \prec_{lex} \Phi_{t+1}$.

If $\beta = \beta_2$, at least one new maximum violator is added after price-increasing step, and maximum violators at time-step $t$ remain as maximum violators at time-step $t+1$, implying that $|K^t| < |K^{t+1}|$ holds. Therefore, since the levels of agents below level $k-1$ remain unchanged and the number of maximum violators increases, it follows that $\Phi_t \prec_{lex} \Phi_{t+1}$.

If $\beta = \beta_3$, since the solution $(\bx^{t+1}, \bp^{t+1})$ becomes pEF1, the iteration terminates.

Now, suppose that Algorithm~\ref{alg:2} executes a transfer step in the $t$-th iteration. Let $P = (i_0, g_1, i_1, g_2, \ldots, i_{\ell-1}, g_\ell, i_\ell)$ be the shortest directed path computed by Algorithm~\ref{alg:3}, where $i_0 = k$ and $i_\ell \in K^t$. Let the indices $a \in [l]$ and $b \in \{0\}\cup [a-1]$ be those computed by Algorithm~\ref{alg:3}.
We observe that for any $0 \leq r \leq \ell$, $\mathrm{level}(i_r, t)=r$ holds. Indeed, it clearly holds that $\mathrm{level}(i_r, t)\le r$ as the existence of $P$. If $\mathrm{level}(i_r, t)<r$, there exists a directed path from $k$ to $i_r$ of the length is less than $r$, contradicting the minimality of the length of $P$.

We next establish the following three claims.
\begin{enumerate}
    \item After the transfer step, the bundle size of $i_b$ increases by exactly 1, i.e., $|\bx^{t+1}_{i_b}| = |\bx^{t}_{i_b}| + 1$.
    \item After the transfer step, the bundle size of agents other than $i_b$ with level at most $b$ do not change, i.e., $\mathrm{level}(i, t)\le b {\rm ~and~} i\neq i_b \implies |\bx^{t+1}_{i}| = |\bx^{t}_{i}|$
    \item After the transfer step, the levels of agents with level at most $b$ remain unchanged, i.e., $\mathrm{level}(i, t) \le b \implies \mathrm{level}(i, t+1)=\mathrm{level}(i, t)$.
\end{enumerate}

It is clear from the operation of Algorithm~\ref{alg:3} that the bundle size of $i_b$ increases by exactly 1.

To see the second part, the only agents whose bundle sizes change are $i_a$ and $i_b$. Since the level of $i_a$ is $a > b$, the claim holds.

Now, let us show the third part.
Let $i$ be an agent such that $\mathrm{level}(i, t) \le b$, and let $Q$ denote the shortest directed path from agent $k$ to agent $i$ in $G(\bx^t, \bp^t)$.
Since the transfer step at time-step $t$ does not affect $Q$, $Q$ still exists in $G(\bx^{t+1}, \bp^{t+1})$. 
This implies that $\mathrm{level}(i, t+1) \le \mathrm{level}(i, t)$.
We derive a contradiction by assuming $\mathrm{level}(i, t+1) < \mathrm{level}(i, t)$.

Let $Q'$ denote the shortest directed path from agent $k$ to agent $i$ in $G(\bx^{t+1}, \bp^{t+1})$. 
Since a transfer step does not change the MBB edges and the assumption of $\mathrm{level}(i, t+1) < \mathrm{level}(i, t)$, $Q'$ must include a newly created allocation edge (which did not exist at time-step $t$ but exists at time-step $t+1$). Let $(g, j)$ be such an allocation edge where the level of $j$ is minimal among such edges. Let $j'$ be the agent on $Q'$ whose level is exactly one less than $j$.
By the choice of $j$, the directed path from $k$ to $j'$ on $Q'$ exists at both time-step $t$ and $t+1$.
Since $(g, j)$ is a newly created allocation edge, some agent other than agent $j$ must have good $g$ at time-step $t$. Let $j''$ denote the agent who held $g$ at time-step $t$.

If agent $j''$ matches an agent on the path $Q'$ from agent $k$ to $j'$, let $Q''$ denote the subpath of $Q'$ from agent $k$ to agent $j''$.
We have
\begin{equation*}
   b < \text{level}(j'', t) \le \text{level}(j'', t+1) < \text{level}(j, t+1) \le \text{level}(i, t+1) < \text{level}(i, t) \le b, 
\end{equation*}
where the first inequality follows from the fact that agent $j''$ relinquishes good $g$ in the transfer step at time-step $t$, the second inequality follows from the existence of $Q''$ at both time-step $t$ and $t+1$, and the minimality of the length of $Q'$, 
leading to a contradiction.

Otherwise, by combining the directed path from $k$ to $j'$ on $Q'$ with the path $(j', g, j'')$, we obtain a directed path $Q'''$ from agent $k$ to agent $j''$ at time-step $t$. Therefore, we have
\begin{equation*}
   b < \text{level}(j'', t) \le \text{level}(j', t) + 1 \le \text{level}(j', t+1) + 1 \le \text{level}(j, t+1) \le \text{level}(i, t+1) < \text{level}(i, t) \le b, 
\end{equation*}
where the first inequality follows from the fact that agent $j''$ relinquishes good $g$ in the transfer step at time-step $t$, the second inequality follows from the construction of $Q'''$, and the third inequality follows from the the existence of $Q'$ at both time-step $t$ and $t+1$, and the minimality of the length of $Q'$, 
leading to a contradiction. Hence, the third claim holds.

From the three claims above, it follows that $\Phi_t \prec_{lex} \Phi_{t+1}$.
This completes the proof of Lemma~\ref{lem:find}.

\end{proof}

We are now ready to prove Theorem~\ref{thm:termination}.
\begin{proof}[Proof of Theorem~\ref{thm:termination}]
By Lemma~\ref{lem:find}, each iteration of Algorithm~\ref{alg:1} terminates. Thus, Algorithm~\ref{alg:1} is guaranteed to terminate.

We now show that if the number of agents is constant, the algorithm runs in polynomial time. If $n$ is constant, each $k \in [n]$ is also constant, and thus, by Lemma~\ref{lem:find}, the number of iterations of Algorithm~\ref{alg:2} is bounded by $\mathbf{poly}(m)$.

Algorithm~\ref{alg:1} performs an iterative process, sequentially adding each agent to the instance. This loop repeats $n$ times, executing a sequence of operations for each agent. The computation of $M_k$ and the initial price $p_g$ for each good can both clearly be completed in polynomial time.

Algorithm~\ref{alg:2} iterates while the current solution $(\bx, \bp)$ is not pEF1 in the instance $(\mathcal{N}, \mathcal{M}, \mathcal{V})$. Checking whether $(\bx, \bp)$ satisfies pEF1 can be performed in polynomial time. Additionally, determining whether $K \cap R_{\mathcal{N}}$ is empty set can be achieved by constructing $G(\bx, \bp)$ and using a breadth-first search to compute $R_{\mathcal{N}}$, which is also feasible in polynomial time. Furthermore, the calculations of $\beta_1$, $\beta_2$, and $\beta_3$, as well as the price updates, are straightforward polynomial-time operations.

Algorithm~\ref{alg:3} performs the search for the shortest path and the transfer of goods between agents. Constructing $G(\bx, \bp)$ and finding the shortest path are both achievable in polynomial time. Similarly, the computation of indices $a$ and $b$, along with the item transfer operations, are clearly executable within polynomial time.

From the above analysis, we conclude that  Algorithm~\ref{alg:1} runs in polynomial time when the number of agents is fixed.

\end{proof}

Theorem~\ref{thm:main} follows directly from Theorem~\ref{thm:correctness} and~\ref{thm:termination}.
\section{Conclusion}\label{sec:con}
We study the problem of fairly and efficiently allocating indivisible goods among agents with additive valuation functions. We develop an algorithm that achieves both EF1 and fPO under additive valuations and show that it runs in polynomial time when the number of agents is fixed. Our approach incrementally adds agents to the instance, finding an allocation that satisfies EF1 and fPO at each step.

Our main algorithm (Algorithm~\ref{alg:1}) runs in polynomial time if the number of iterations of Algorithm~\ref{alg:2} is polynomially bounded. An interesting open question is whether there exist instances where this approach fails to run in polynomial time. Additionally, refining the choice of the shortest directed path $P$ in Algorithm~\ref{alg:3} may yield further improvements in time complexity.

Determining whether a polynomial-time algorithm exists for finding EF1 and PO (or fPO) allocations remains an important direction for future research. Another intriguing question is the existence of EF1 and PO (or fPO) allocations under more general valuation classes beyond additive valuations, as well as in the context of chore division.

\section*{Acknowledgments}
This work was partially supported by the joint project of Kyoto University and Toyota Motor Corporation, titled ``Advanced Mathematical Science for Mobility Society'' and supported by JST ERATO Grant Number JPMJER2301, and JSPS KAKENHI Grant Number JP23K19956, Japan.

\appendix
\section{Proof of Theorem~\ref{thm:nash}}\label{ap:1}
In this section, we prove that Theorem~\ref{thm:nash} follows from Theorem~\ref{thm:main}. Note that the proof method closely follows that of Theorem 3.3 in~\cite{barman2018finding}. To establish this result, we will employ the following lemma from~\cite{barman2018finding}.
\begin{lemma}(Lemma 3.4 in \cite{barman2018finding})\label{lem:id}
Given a fair division instance with identical and additive valuations, any EF1 allocation provides a $\mathrm{e}^{1/\mathrm{e}}$-approximation for the Nash social welfare maximization problem.
\end{lemma}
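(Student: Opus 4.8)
The plan is to compare a given EF1 allocation $\bx$ directly against a Nash-social-welfare-maximizing allocation $\bx^{\ast}$, exploiting that all agents share a single valuation $v$. Write $b_i := v(\bx_i)$ and $c_i := v(\bx^{\ast}_i)$; since the valuations are identical, $\sum_i b_i = \sum_i c_i = V$, and both $\mathrm{NSW}(\bx) = (\prod_i b_i)^{1/n}$ and $\mathrm{NSW}(\bx^{\ast}) = (\prod_i c_i)^{1/n}$ depend only on these two multisets of numbers. First I would dispose of the degenerate case: if some EF1 bundle is empty, then EF1 applied from that empty agent forces every other agent to hold at most one positively valued good, so there are at most $n-1$ such goods and hence \emph{every} allocation—including $\bx^{\ast}$—leaves some agent empty; both sides then vanish and the claim is trivial. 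So I assume $\min_i b_i > 0$ and, using scale-invariance of the approximation ratio, normalize $\min_i b_i = 1$.

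Next I would record the structural consequence of EF1 under identical valuations. Let $m_i$ be the value of the most valuable good in $\bx_i$. Applying the EF1 guarantee from a minimum-value agent to an arbitrary agent $i$ gives a good $g \in \bx_i$ with $1 \ge b_i - v_g$, and since $m_i \ge v_g$ we obtain $b_i - m_i \le 1$ for every $i$ (and trivially $m_i \le b_i$). Hence each bundle $\bx_i$ contains a single good of value at least $b_i - 1$, and these $n$ top goods are distinct because the bundles are disjoint and all nonempty. This is the key point: a large EF1 bundle is large only because it holds a large \emph{indivisible} good, and that same good constrains every competing allocation of the instance.

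The crux is to bound the optimum against the EF1 value, and I would aim for the clean inequality $\prod_i c_i \le n \prod_i b_i$, equivalently $\mathrm{NSW}(\bx^{\ast}) \le n^{1/n}\,\mathrm{NSW}(\bx)$. The guiding intuition, matching the tight instance in which $n-1$ agents each hold one large good while the last agent has value $1$ and the optimum can lift that last agent only up to the total mass of the remaining small goods, is that the optimum can beat $\bx$ by a factor essentially concentrated on a single minimum-value agent, and that this single-coordinate gain is at most a factor $n$. To make this rigorous I would transport $\bx^{\ast}$ back toward $\bx$ over the shared goods, charging the increase in the product to the redistribution of the bounded ``small-good'' mass while keeping each large good pinned to whichever agent holds it. The main obstacle is exactly here: the fractional relaxation yields only $\prod_i c_i \le (V/n)^n$, which can exceed $n\prod_i b_i$ by an unbounded factor, so the argument cannot invoke AM--GM directly and must genuinely use the integrality of $\bx^{\ast}$ together with the large-good structure of the previous paragraph. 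A convenient way to organize the bookkeeping is to peel off the globally largest good—assigned to exactly one agent in any allocation—so that the concentration forced by an indivisible good is paid for on one coordinate at a time.

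Finally I would convert the bound into the stated constant. Since $n^{1/n} = \exp\!\big((\ln n)/n\big)$ and the map $x \mapsto x^{1/x}$ on $(0,\infty)$ attains its maximum $\mathrm{e}^{1/\mathrm{e}}$ at $x = \mathrm{e}$ (its logarithmic derivative $\tfrac{1 - \ln x}{x^2}$ vanishes there and changes sign), we have $n^{1/n} \le \mathrm{e}^{1/\mathrm{e}}$ for every positive integer $n$. Combining this with the crux inequality gives $\mathrm{NSW}(\bx^{\ast}) \le \mathrm{e}^{1/\mathrm{e}}\,\mathrm{NSW}(\bx)$, that is, any EF1 allocation is an $\mathrm{e}^{1/\mathrm{e}}$-approximation for Nash social welfare under identical additive valuations, as claimed.
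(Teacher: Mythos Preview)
The paper does not give its own proof of this lemma; it quotes it as Lemma~3.4 of \cite{barman2018finding} and uses it as a black box in the proof of Theorem~\ref{thm:nash}. So your proposal must stand on its own, and it does not: the entire content of the lemma lives in the inequality you call ``the crux,'' namely $\prod_i c_i \le n\prod_i b_i$, and you never prove it. The surrounding scaffolding is fine---identical valuations do collapse the problem to comparing two multisets with equal sum, the degenerate empty-bundle case is handled (modulo the harmless slip that ``leaves some agent empty'' should read ``leaves some agent with value~$0$''), the structural fact $b_i - m_i \le 1$ is correct, and $n^{1/n}\le\mathrm{e}^{1/\mathrm{e}}$ is the right finishing step---but you explicitly flag the crux as ``the main obstacle'' and then offer only a heuristic.

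The ``peel off the globally largest good'' idea does not close the gap as stated. After deleting that good, the residual $\bx$ need not be EF1 (the agent who lost its top good may now violate EF1 toward every other agent by an unbounded amount), and the residual $\bx^{\ast}$ need not be NSW-optimal for the smaller instance, so there is no inductive hypothesis available on either side. Something genuinely combinatorial is required; one route is to use that the NSW-optimal allocation is \emph{itself} EF1, so \emph{its} top goods $h_j$ also satisfy $v(h_j)\ge c_j-c_1$, and then to track where those $n$ distinct goods sit inside the bundles of $\bx$ and combine with $\sum_i b_i=\sum_i c_i$. Until an argument of that kind is actually carried out, the proposal is a correct outline with the proof-bearing step missing.
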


\begin{proof}[Proof of Theorem~\ref{thm:nash}]
For a given fair division instance $\mathcal{I} = (N, M, V)$, let $\bx$ and $\bp$ denote the allocation and the final price vector, respectively, as returned by Algorithm~\ref{alg:1}. Let $\alpha_i$ be the maximum bang-per-buck ratio for each agent $i$ with respect to $\bp$. Construct a scaled instance $\mathcal{I}' = (N, M, V')$ by setting $v'_{ig} = \frac{v_{ig}}{\alpha_i}$ for all $i\in N$ and $g\in M$. Then, for any allocation $\by$ in $\mathcal{I}'$, the Nash social welfare (NSW) of $\by$ in $\mathcal{I}$ is $\frac{1}{\left( \prod_{i \in N} \alpha_i \right)^{1/n}}$ times the NSW of $\by$ in $\mathcal{I}'$. Therefore, to establish the desired approximation guarantee, it suffices to show that $\bx$ achieves an approximation factor of $\mathrm{e}^{1/\mathrm{e}}$ in the scaled instance $\mathcal{I}'$.

Let $\bx^*$ denote a maximum Nash social welfare allocation in the original instance $\mathcal{I}$. From the above discussion, it follows that $\bx^*$ is also a maximum Nash social welfare allocation in the scaled instance $\mathcal{I}'$. Furthermore, for each agent $i$, we have $v'_{ig} = p_g$ for all $g \in \mathrm{MBB}_i$, and $v'_{ig} < p_g$ for all $g \notin \mathrm{MBB}_i$. Consequently, for any agent $i$, since $\bx_i \subseteq \mathrm{MBB}_i$, we have $v'_i(\bx_i) = \bp(\bx_i)$, and $v'_i(\bx^*_i) \leq \bp(\bx^*_i)$. As a result, the NSW of $\bx$ and $\bx^*$ in $\mathcal{I}'$ satisfy the following equality and inequality:

\begin{align*} \left( \prod_{i=1}^n v'_i(\bx_i) \right)^{1/n} &= \left( \prod_{i=1}^n \bp(\bx_i) \right)^{1/n}, \\
\left( \prod_{i=1}^n v'_i(\bx^*_i) \right)^{1/n} &\leq \left( \prod_{i=1}^n \bp(\bx^*_i) \right)^{1/n}.
\end{align*}

Furthermore, we construct an instance $\mathcal{I}^{\mathrm{id}} = (N, M, V^{\mathrm{id}})$ with identical valuations by setting $v^{\mathrm{id}}_{ig} = p_g$ for all $i \in N$ and $g \in M$. The allocation $\bx$ is pEF1 with respect to $\bp$ in $\mathcal{I}$. Thus, $\bx$ satisfies EF1 in $\mathcal{I}^{\mathrm{id}}$. By Lemma~\ref{lem:id}, the following inequality holds.
\begin{equation*}
    \left( \prod_{i=1}^n \bp(\bx_i) \right)^{1/n} \ge 
    \mathrm{e}^{-1/\mathrm{e}} \max_{\by \in \mathcal{X}} \left( \prod_{i=1}^n \bp(\by_i) \right)^{1/n} \ge 
    \mathrm{e}^{-1/\mathrm{e}} \left( \prod_{i=1}^n \bp(\bx^*_i) \right)^{1/n}, 
\end{equation*}
where $\mathcal{X}$ represents the set of all allocations.
Combining the above equality and inequality, we obtain 
\begin{equation*}
    \left( \prod_{i=1}^n v'_i(\bx_i) \right)^{1/n} \ge
    \mathrm{e}^{-1/\mathrm{e}} \left( \prod_{i=1}^n v'_i(\bx^*_i) \right)^{1/n}, 
\end{equation*}
which provides a similar approximation guarantee for the original instance $\mathcal{I}$.
Finally, by Theorem~\ref{thm:main}, the allocation $\bx$ can be computed in polynomial time when the number of agent is fixed.
This completes the proof of Theorem~\ref{thm:nash}.
\end{proof}
\section{Error Analysis of the Algorithm in~\cite{garg2024computing}}\label{ap:2}
In this section, we discuss an error in~\cite{garg2024computing}. In this paper, the authors claim the existence of a pseudo-polynomial-time algorithm for finding an allocation that satisfies both EF1 and fPO by modifying algorithm~\cite{barman2018finding}. However, as we will explain below, there is an error in the proof, which prevents us from verifying its correctness. Specifically, this concerns Lemma 4 on page 11 in~\cite{garg2024computing}.

In the proof of Lemma 4, it is stated that ``Since there are at most $n$ agents outside $C_L$ at any given iteration, there can only be at most $n$ price-rise steps." However, since a transfer step may cause an agent to leave the set $C_L$, this claim does not hold.

As a result, with the current proof, the claim of Lemma 4 cannot be established, leaving the existence of a pseudo-polynomial-time algorithm for finding an allocation that satisfies EF1 and fPO unresolved.
\bibliographystyle{alphaurl}
\bibliography{main}
\end{document}